\tikzset{->-/.style={
    decoration={
        markings,
        mark=at position 0.5 with {\arrow{>}}
    },
    postaction={decorate}
}}
\newtheorem{lemma}{Lemma}
\newcommand{\U}{\operatorname{U}}
\newcommand{\env}{\textbf{env}}
\title{Breeding realistic D-brane models}
\author{Gregory J.\ Loges}
\author{and Gary Shiu}
\affiliation{Department of Physics, University of Wisconsin-Madison,\\ 1150 University Ave, Madison WI 53706, U.S.A.}
\emailAdd{gloges@wisc.edu}
\emailAdd{shiu@physics.wisc.edu}
\abstract{
Intersecting branes provide a useful mechanism to construct particle physics models from string theory with a wide variety of desirable characteristics. The landscape of such models can be enormous, and navigating towards regions which are most phenomenologically interesting is potentially challenging. Machine learning techniques can be used to efficiently construct large numbers of consistent and phenomenologically desirable models. In this work we phrase the problem of finding consistent intersecting D-brane models in terms of genetic algorithms, which mimic natural selection to evolve a population collectively towards optimal solutions. For a four-dimensional ${\cal N}=1$ supersymmetric type IIA orientifold with intersecting D6-branes, we demonstrate that $\mathcal{O}(10^6)$ unique, fully consistent models can be easily constructed, and, by a judicious choice of search environment and hyper-parameters, $\mathcal{O}(30\%)$ of the found models contain the desired Standard Model gauge group factor. Having a sizable sample allows us to draw some preliminary landscape statistics of intersecting brane models both with and without the restriction of having the Standard Model gauge factor.
}
\begin{document} 

\maketitle
\flushbottom

\section{Introduction}
The landscape of string vacua is famously enormous. Estimates for the number of vacua in type IIB flux compactifications easily reach $10^{500}$ \cite{Ashok:2003gk} and Calabi-Yau four-fold compactifications of F-theory lead possibly to upwards of $10^{272,000}$ flux vacua \cite{Taylor:2015xtz}. Such large ensembles lend themselves well to statistical arguments \cite{Douglas:2003um,Gmeiner:2005vz}, but there remains the question of whether various desirable properties of the vacua are sufficiently independent so as to make statistical extrapolations robust. While some observables are expected to be statistically independent, certain correlations have been observed in specific searches.\footnote{An example is the correlation between the string coupling $g_s$ and flux superpotential $W_0$ observed in \cite{Cole:2019enn}. It remains to be seen whether such correlation is due to sampling bias of the search algorithms used.} Another issue faced when confronted with such large ensembles is how to sample efficiently or navigate towards those corners with uncommon or desirable properties such as phenomenologically viable cosmological constant or particle spectrum.

In recent years data science techniques such as topological data analysis and machine learning have proven invaluable in understanding the underlying structure of the string landscape~\cite{Cirafici2015,Carifio:2017bov,Mutter:2018sra,Cole:2018emh,Cole:2019enn,Deen:2020dlf,Otsuka:2020nsk,Krippendorf:2021uxu,Constantin:2021for,Cole:2021nnt,Gao:2021xbs,Berman2021}
or simply tackling the enormous (theoretical) datasets involved~\cite{Abel2014,He:2017aed,Ruehle:2017mzq,Krefl:2017yox,Bull:2018uow,Klaewer:2018sfl,Bull:2019cij,Halverson:2019tkf,Bena:2021wyr,Abel:2021ddu,Faraggi2021}.
One obstacle that one can be faced with is the fact that the conditions which describe vacua are systems of Diophantine equations (i.e.~one is looking for solutions in the integers), which are the subject of Hilbert's tenth problem~\cite{Hilbert}. The task of determining if an arbitrary Diophantine equation has solutions was shown to be an undecidable problem by Matiyasevich in 1970~\cite{MR0258744}, which already hints at the rich structure of such equations. Of course, for any given Diophantine equation it may be relatively easy to find one or more integer solutions. A more prudent question in such cases where solutions are known to exist is how many solutions exist with integers below some bound. How quickly does the number of solutions grow as the bound is increased? Are there only finitely many solutions?

When traditional gradient-descent-like methods are unavailable because of the discreteness of the variables involved, genetic algorithms provide an alternative method to search for solutions. As the name suggests, genetic algorithms take inspiration from evolving biological systems to solve some problem via ``natural selection''. A population of individuals becomes fitter as cross-over (i.e.~``breeding'') and random mutations improve offspring over the generations. If the goal is to optimize some fitness function quantifying how well the problem is solved, heuristically the cross-over provides large changes which help to avoid local extrema, while mutations can provide smaller changes that fine-tune the optimization.

In this paper we initiate a program of using genetic algorithms to breed realistic D-brane models. The concrete setup we consider is type IIA orientifold with intersecting branes, which has proven to be a fruitful framework for constructing realistic particle physics models from string theory (see \cite{Lust2004,Blumenhagen2005,Blumenhagen:2006ci,ibanez2012} for reviews). Among the desirable properties built-in in this approach is the existence of chiral matter at brane intersections \cite{Berkooz:1996km}. To learn how genetic algorithms find optimal solutions, it is advantageous to start with the simplest example. The ``harmonic oscillator'' in this context is a $\mathbb{T}^6/(\mathbb{Z}_2\times\mathbb{Z}_2)$ orientifold where the consistency conditions for four-dimensional ${\cal N}=1$ supersymmetric intersecting brane models were fully developed and three-family Standard-like models were first constructed~\cite{Cvetic:2001nr,Cvetic:2001tj}. As will be described below, the set of consistency conditions on brane orientations constitute a system of Diophantine equations and inequalities. In addition, the search space is high-dimensional: for a fixed number of D6-brane stacks, $k$, one needs to specify $7k$ integers and a na\"ive brute-force search up to numbers of size $L$ amounts to checking $\mathcal{O}(L^{7k})$ combinations. Even for $k\lesssim 5$ this quickly becomes intractable. We explore the viability of using genetic algorithms to efficiently search for consistent intersecting brane models.

The remainder of this paper is organized as follows. In Sec.~\ref{sec:review} we review the model of interest and its consistency conditions. In Sec.~\ref{sec:GA} we describe the genetic algorithm used to search for consistent intersecting brane models, including choices for the chromosome structure, cross-over methods, mutations and fitness function. In Sec.~\ref{sec:results} we present an interpretation of the strategy employed by the genetic algorithm, as well as results for three different ensembles of models. Finally, we conclude in Sec.~\ref{sec:discussion} with a discussion and thoughts for future work. Some details of the genetic algorithm are relegated to App.~\ref{app:GA}.


\section{\texorpdfstring{\boldmath Review of $\mathbb{T}^6/(\mathbb{Z}_2\times\mathbb{Z}_2)$ orientifold}{Review of T6/Z2xZ2 orientifold}}
\label{sec:review}

In this section we review the features of type IIA orientifold models relevant to our study using genetic algorithms. For a more comprehensive discussion we direct the interested reader to~\cite{Lust2004,Blumenhagen2005,Blumenhagen:2006ci,ibanez2012}. We restrict attention to quantities that are topological in nature and can be computed reliably without details of the compactification geometry. This includes the consistency conditions which ensure the theory is free of anomalies and preserves 4D $\mathcal{N}=1$ supersymmetry.

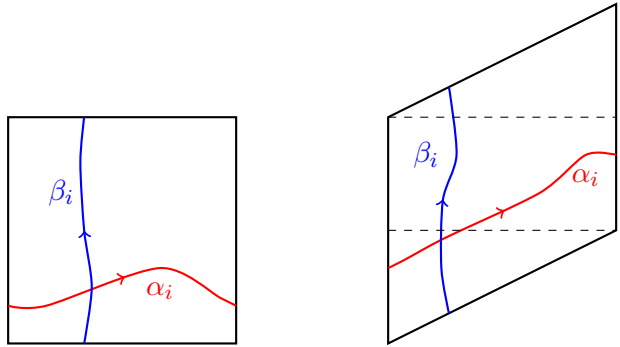
\begin{figure}[t]
	\centering
	\begin{tikzpicture}
		\draw[thick] (0,0) -- (3,0) -- (3,3) -- (0,3) -- cycle;
		\draw[thick,red,->-] plot [smooth] coordinates {(0,0.5) (0.5,0.5) (2,1) (2.8,0.6) (3,0.5)};
		\node[red] at (2,0.7) {$\alpha_i$};
		\draw[thick,blue,->-] plot [smooth] coordinates {(1,0) (1.1,0.75) (1,1.5) (0.95,2.4) (1,3)};
		\node[blue] at (0.7,2) {$\beta_i$};
		
		\draw[thick] (5,0) -- (8,1.5) -- (8,4.5) -- (5,3) -- cycle;
		\draw[thick,red,->-] plot [smooth] coordinates {(5,1) (5.2,1.1) (5.75,1.4) (7,2) (7.6,2.5) (8,2.5)};
		\node[red] at (7.6,2.2) {$\alpha_i$};
		\draw[thick,blue,->-] plot [smooth] coordinates {(5.8,0.4) (5.7,1) (5.72,1.9) (5.9,2.5) (5.8,3.4)};
		\node[blue] at (5.5,2.5) {$\beta_i$};
		\draw[dashed] (5,1.5) -- (8,1.5);
		\draw[dashed] (5,3) -- (8,3);
	\end{tikzpicture}
	\caption{The two torus shapes (rectangular/untilted and tilted) compatible with the orientifold projection. The fundamental 1-cycles $\alpha_i$ and $\beta_i$ are shown in red and blue.}
	\label{fig:torishapes}
\end{figure}

Our starting point is Type IIA string theory on the orbifold $X^6=\mathbb{T}^6/(\mathbb{Z}_2\times\mathbb{Z}_2)$. Writing $z_i$ for the complex coordinates of $\mathbb{T}^6$, the group $\mathbb{Z}_2\times\mathbb{Z}_2$ is generated by
\begin{equation}
\begin{aligned}
	\theta:\quad (z_1,z_2,z_3) &\mapsto (z_1,-z_2,-z_3) \,,\\
	\omega:\quad (z_1,z_2,z_3) &\mapsto (-z_1,z_2,-z_3) \,.
\end{aligned}
\end{equation}
This projects out many moduli of the $\mathbb{T}^6$, in particular the 6-torus is now factorizable: $\mathbb{T}^6=\mathbb{T}^2\times\mathbb{T}^2\times\mathbb{T}^2$. Modding out by the orientifold action $\Omega\overline{\sigma}(-1)^F$, where $\Omega$ and $(-1)^F$ are world-sheet parity and fermion number, respectively, and $\overline{\sigma}$ acts by complex conjugation on each $\mathbb{T}^2$, introduces O6-planes at the fixed points of the involution. These carry R-R charges that must be canceled by the introduction of D6-branes (along with their orientifold images). Writing $[\Pi_\text{O6}]$ for the total homology class of the O6-planes on $X^6$, $[\Pi_a]$ for the homology class of a stack of $N_a$ D6-branes and $[\Pi_{a'}]$ for its orientifold image, the R-R tadpole cancellation condition reads
\begin{equation}
\label{eq:homclassvanish}
	\sum_a N_a\big([\Pi_a] + [\Pi_{a'}]\big) - 4[\Pi_\text{O6}] = 0 \,.
\end{equation}
It is convenient to expand all 3-cycles in terms of a symplectic basis of $H_3(X^6) = H_3^+(X^6)\oplus H_3^-(X^6)$, e.g.
\begin{equation}
\label{eq:homexpansion}
	{}[\Pi_a] = \sum_I \big(\widetilde{X}_a^I[\pi_I^+] + \widetilde{Y}_a^I[\pi_I^-]\big) \,,
\end{equation}
where $[\pi_I^+]\in H_3^+(X^6)$ and $[\pi_I^-]\in H_3^-(X^6)$ are orientifold-even and orientifold-odd, respectively. For each $\mathbb{T}^2$ factor of the 6-torus we can choose the complex coordinate so that the identifications take the form
\begin{equation}
	z_i \sim z_i + (1 + ib_i) \sim z_i + i \,.
\end{equation}
Compatibility with the orientifold action imposes $2b_i\in\mathbb{Z}$, and the only two unique choices are $b_i=0$ (rectangular/untilted) and $b_i=\frac{1}{2}$ (tilted): see Fig.~\ref{fig:torishapes} for a sketch of these two possibilities along with the fundamental 1-cycles of the torus, $\alpha_i$ and $\beta_i$. We restrict attention to factorizable 3-cycles, which are determined by three pairs of coprime winding numbers, $\{(n^i,m^i)\}_{i=1,2,3}$, around the fundamental torus cycles $\alpha_i,\beta_i$ on which the orientifold action acts as
\begin{equation}
\begin{aligned}
	\overline{\sigma}:\quad \alpha_i &\mapsto \alpha_i-2b_i\beta_i \,,\\
	\beta_i &\mapsto -\beta_i \,.
\end{aligned}
\end{equation}
A symplectic basis of factorizable orientifold-even and orientifold-odd 3-cycles can then be written as
\begin{equation}
\begin{aligned}
	{}[\pi_0^+] &= [\widetilde{\alpha}_1\widetilde{\alpha}_2\widetilde{\alpha}_3] \,, & [\pi_1^+] &= -[\widetilde{\alpha}_1\beta_2\beta_3] \,, & [\pi_2^+] &= -[\beta_1\widetilde{\alpha}_2\beta_3] \,, & [\pi_3^+] &= -[\beta_1\beta_2\widetilde{\alpha}_3] \,,\\
	[\pi_0^-] &= [\beta_1\beta_2\beta_3] \,, & [\pi_1^-] &= -[\beta_1\widetilde{\alpha}_2\widetilde{\alpha}_3] \,, & [\pi_2^-] &= -[\widetilde{\alpha}_1\beta_2\widetilde{\alpha}_3] \,, & [\pi_3^-] &= -[\widetilde{\alpha}_1\widetilde{\alpha}_2\beta_3] \,,
\end{aligned}
\end{equation}
where we have introduced the orientifold-even combinations $\widetilde{\alpha}_i = \alpha_i - b_i\beta_i$ along with a corresponding shift in the winding numbers to $\widetilde{m}^i = m^i + b_in^i$. The expansion coefficients of equation~\eqref{eq:homexpansion} are then
\begin{equation}
\begin{aligned}
	\widetilde{X}_a^0 &= n_a^1n_a^2n_a^3 \,, & \widetilde{X}_a^1 &= -n_a^1\widetilde{m}_a^2\widetilde{m}_a^3 \,, & \widetilde{X}_a^2 &= -\widetilde{m}_a^1n_a^2\widetilde{m}_a^3 \,, & \widetilde{X}_a^3 &= -\widetilde{m}_a^1\widetilde{m}_a^2n_a^3 \,,\\
	\widetilde{Y}_a^0 &= \widetilde{m}_a^1\widetilde{m}_a^2\widetilde{m}_a^3 \,, & \widetilde{Y}_a^1 &= -\widetilde{m}_a^1n_a^2n_a^3 \,, & \widetilde{Y}_a^2 &= -n_a^1\widetilde{m}_a^2n_a^3 \,, & \widetilde{Y}_a^3 &= -n_a^1n_a^2\widetilde{m}_a^3 \,.
\end{aligned}
\end{equation}
The coefficients for the orientifold image are found by the replacement $\widetilde{m}^i\to -\widetilde{m}^i$ (equivalently, $\widetilde{Y}^I\to-\widetilde{Y}^I$).

O6-planes are located at the fixed-point loci of $\Omega\overline{\sigma}(-1)^F$, $\Omega\overline{\sigma}(-1)^F\theta$, $\Omega\overline{\sigma}(-1)^F\omega$ and $\Omega\overline{\sigma}(-1)^F\theta\omega$; all-told the O6-plane homology class is
\begin{equation}
\label{eq:O6homclass}
	{}[\Pi_\text{O6}] = 4[\pi_0^+] + 4(1-b_2)(1-b_3)[\pi_1^+] + 4(1-b_1)(1-b_3)[\pi_2^+] + 4(1-b_1)(1-b_2)[\pi_3^+] \,,
\end{equation}
i.e.\ $\widetilde{X}_\text{O6}^0=4$, $\widetilde{X}_\text{O6}^1=4(1-b_2)(1-b_3)$, etc.\ and $\widetilde{Y}_\text{O6}^I=0$. To unify the treatment of rectangular and tilted tori, it is useful to introduce the scaled winding numbers $\widehat{m}^i = (1+2b_i)\widetilde{m}^i = m^i + 2b_i(n^i+m^i)\in\mathbb{Z}$ and coefficients $\widehat{X}^1 = -n^1\widehat{m}^2\widehat{m}^3\in\mathbb{Z}$, etc. For example, we have simply $\widehat{X}_\text{O6}^I=4$. With this normalization choice, the consistency conditions read (e.g.\ see~\cite{Lust2004,Blumenhagen2005,ibanez2012})
\begin{align}
\begin{aligned}
	\text{Tadpole:}& & \sum_a N_a\widehat{X}_a^I &= 8 \qquad & &\text{for each }I\in\{0,1,2,3\} \,,\\
	\text{K-theory:}& & \sum_a N_a\widehat{Y}_a^I &\in 2\mathbb{Z} & &\text{for each }I\in\{0,1,2,3\} \,,\\
	\text{SUSY:}& & \qquad \left\{\begin{array}{r}
		\sum\limits_{I=0}^3 \widehat{X}_a^I\widehat{U}_I > 0\\
		\sum\limits_{I=0}^3\frac{\widehat{Y}_a^I}{\widehat{U}_I} = 0
	\end{array}\right.\hspace{-24pt}& & &\text{for each stack }a \,,
\end{aligned}
\end{align}
where $\widehat{U}_0 = R_x^1R_x^2R_x^3$ and $\widehat{U}_1 = (1-b_2)(1-b_3)R_x^1R_y^2R_y^3$, etc.\ with $R_x^i,R_y^i>0$ the tori radii. The tadpole condition, as discussed above, amounts to requiring the total R-R charge on the compact space vanish. The K-theory condition avoids there being a global anomaly, and the SUSY conditions ensure that each D6-brane stack preserves the same supersymmetries as do the O6-planes. With $k$ D6-brane stacks, there are a total of $2k+8$ conditions to be satisfied.

\begin{table}[t]
	\centering
	\renewcommand{\arraystretch}{1.25}
	\begin{tabular}{|c||c|c|cc|c|}
		\hline
		Type & $\#(\widehat{X}^I,\widehat{Y}^I=0)$ & $\#(\widehat{X}^I>0)$ & SUSY-X & SUSY-Y & Example windings $(b_i=0)$ \\ \hline\hline
		$A$ & \multirow{2}{*}{$(0,0)$} & 3 & \checkmark & \checkmark & $(+,+,+,-,+,-)$\\ \cline{1-1}\cline{3-6}
		$A'$ & & 1 & \checkmark$^\ast$ & \checkmark$^\ast$ & $(+,-,+,-,+,-)$\\ \hline
		$B$ & \multirow{3}{*}{$(2,2)$} & 2 & \checkmark & \checkmark & $(\;1,\;0,+,+,+,-)$\\ \cline{1-1}\cline{3-6}
		\multirow{2}{*}{$B'$} & & $1$ & \checkmark & & $(\;0,\;1,+,-,-,-)$\\
		 & & $0$ & & \checkmark & $(\;1,\;0,+,+,-,+)$\\ \hline
		$C$ & \multirow{2}{*}{$(3,4)$} & 1 & \checkmark & \checkmark & $(\;1,\;0,\;1,\;0,\;1,\;0)$\\ \cline{1-1}\cline{3-6}
		$C'$ & & 0 & & \checkmark & $(\;1,\;0,\;0,\;1,\;0,\;1)$\\ \hline
		\multirow{2}{*}{$D'$} & \multirow{2}{*}{$(3,3)$} & 1 & \checkmark & & $(+,-,\;1,\;0,\;0,\;1)$\\
		 & & 0 & & & $(+,+,\;1,\;0,\;0,\;1)$\\ \hline
		$E'$ & $(4,3)$ & 0 & \checkmark & & $(\;0,\;1,\;1,\;0,\;1,\;0)$\\ \hline
	\end{tabular}
	\caption{Classification of branes according to how many $\widehat{X}^I,\widehat{Y}^I$ are positive, negative or zero. Only types $A$, $B$ and $C$ can satisfy both SUSY conditions. Primes indicate brane types that cannot satisfy both SUSY conditions: note that while type $A'$ branes \emph{can} satisfy either SUSY condition it is straightforward to prove that they cannot do so \emph{simultaneously} (e.g.\ see~\cite{Douglas2006}).}
	\label{tab:classification}
\end{table}

The moduli which appear in the SUSY conditions for each stack are \emph{a priori} real-valued. However, the following lemma is useful in restricting the possible values of $\widehat{U}_I$ which need to be considered.
\begin{lemma}
Without loss of generality, the moduli $\widehat{U}_I$ may be chosen to be a set of four positive, coprime integers.
\label{positive-coprime-lemma}
\end{lemma}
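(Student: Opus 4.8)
The plan is to use two structural features of the SUSY conditions: their homogeneity under a simultaneous rescaling of the $\widehat{U}_I$, and the fact that, after inverting the moduli, the vanishing condition becomes a linear system with integer coefficients. (The tadpole and K-theory conditions do not involve the $\widehat{U}_I$, so the lemma is really a statement about the two SUSY constraints alone.) First I would observe that for any $\lambda>0$ the substitution $\widehat{U}_I\mapsto\lambda\widehat{U}_I$ multiplies $\sum_I\widehat{X}_a^I\widehat{U}_I$ by $\lambda$ and $\sum_I\widehat{Y}_a^I/\widehat{U}_I$ by $\lambda^{-1}$, so it preserves both the strict inequality and the equality for every stack $a$. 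Hence only the ray $[\widehat{U}_0:\widehat{U}_1:\widehat{U}_2:\widehat{U}_3]$ in the positive orthant is relevant, and it is enough to show that any admissible real tuple can be replaced by an admissible tuple of positive rationals: clearing denominators and then dividing by the overall $\gcd$ turns this into four positive coprime integers which, by the rescaling invariance, still satisfy every condition.

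To produce the rational representative I would pass to the variables $v_I:=1/\widehat{U}_I>0$. The vanishing SUSY condition becomes $\sum_I\widehat{Y}_a^I v_I=0$ for each stack $a$; since the $\widehat{Y}_a^I$ are integers, the common solution set $V\subseteq\mathbb{R}^4$ is a linear subspace defined over $\mathbb{Q}$, hence spanned by rational vectors. The positivity SUSY condition becomes $\sum_I\widehat{X}_a^I/v_I>0$, a finite list of strict inequalities, each of which cuts out an open subset of the positive orthant $(\mathbb{R}_{>0})^4$; write $\mathcal{O}$ for the intersection of all of these open sets. A consistent choice of moduli for the given brane content then corresponds precisely to a point of $V\cap\mathcal{O}$.

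The crux is a density argument. If $V\cap\mathcal{O}\neq\emptyset$ then it contains a rational point: $V\cap\mathbb{Q}^4$ is dense in $V$ because $V$ has a rational basis, while $V\cap\mathcal{O}$ is a nonempty subset of $V$ that is open in the subspace topology of $V$, so the two must meet. Choosing a rational $v\in V\cap\mathcal{O}$, the moduli $\widehat{U}_I=1/v_I$ are positive rationals satisfying all the SUSY conditions; multiplying the four of them by a common positive integer to clear denominators and then dividing by $\gcd(\widehat{U}_0,\widehat{U}_1,\widehat{U}_2,\widehat{U}_3)$ is an overall positive rescaling, hence harmless, and leaves four positive coprime integers. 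Since the $\widehat{U}_I$ were \emph{a priori} arbitrary positive reals, this shows that no consistent model is lost by restricting the search to such integer tuples.

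I do not anticipate a real obstacle; the one point that needs care is ensuring the strict positivity inequalities are not violated when the real solution is replaced by a rational one, which is exactly why it is worth phrasing those constraints as an \emph{open} region and appealing to density of the rational points of the subspace $V$, rather than attempting an ad hoc rounding. It is also worth remarking that the statement is vacuous unless some real choice of moduli satisfies the SUSY conditions to begin with, so there is no borderline case in which a real solution exists but no coprime-integer one does.
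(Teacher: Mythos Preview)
Your argument is correct and follows the same route as the paper: invert the moduli so that the vanishing SUSY condition becomes an integer linear system, observe that its kernel is defined over $\mathbb{Q}$ (the paper makes this concrete via Smith normal form, you state it more abstractly), and then pass from a rational solution to coprime integers by clearing denominators and rescaling. If anything, your write-up is more careful than the paper's, which leaves implicit the step you spell out: that positivity of the $\widehat{U}_I$ together with the strict SUSY-X inequalities cut out an \emph{open} region, so density of the rational points of the kernel guarantees a positive rational solution satisfying all constraints, not merely a rational basis for the kernel.
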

\begin{proof}
Write the second SUSY condition above as a matrix equation,
\begin{equation}
    \mathcal{Y}\mathcal{U} = 0 \,, \qquad [\mathcal{Y}]_{aI} = \widehat{Y}_a^I \,, \qquad  [\mathcal{U}]_I = (\widehat{U}_I)^{-1} \,.
\end{equation}
It suffices to show that if a solution exists with all $\widehat{U}_I$ positive, then there exist solutions where all $\widehat{U}_I$ are positive and rational. The $k\times 4$ integer matrix $\mathcal{Y}$ may be brought to \emph{Smith normal form}, $\mathcal{G}=\mathcal{SYT}$, where $\mathcal{S}$ and $\mathcal{T}$ are unimodular integer matrices of size $k\times k$ and $4\times 4$, respectively. The $k\times 4$ integer matrix $\mathcal{G}$ has nonzero entries only on its main diagonal. It is then clear that there exists a rational basis for solutions of $\mathcal{GV}=0$, from which a rational basis for solutions of $\mathcal{YU}=0$ may be found via $\mathcal{U}=\mathcal{TV}$, since $\mathcal{T}$ is an integer matrix and $\mathcal{YTV}=\mathcal{S}^{-1}\mathcal{GV}=0$.
\end{proof}
\noindent
We will use this fact later in Sec.~\ref{sec:results} when searching for consistent models.


\subsection{Brane classification}
\label{sec:braneclassification}

\begin{table}[t]
	\centering
	\begin{tabular}{|c|cc|}
		\hline
		$b_i$ & $w^i$ & $\widehat{X}^I$\\ \hline\hline
		\multirow{4}{*}{$(0,0,0)$} & $(\;\;\;1,\;\;\;0,\;\;\;1,\;\;\;0,\;\;\;1,\;\;\;0)$ & $(1,0,0,0)$\\
		& $(\;\;\;1,\;\;\;0,\;\;\;0,\;\;\;1,\;\;\;0,\,-1)$ & $(0,1,0,0)$\\
		& $(\;\;\;0,\;\;\;1,\;\;\;1,\;\;\;0,\;\;\;0,\,-1)$ & $(0,0,1,0)$\\
		& $(\;\;\;0,\;\;\;1,\;\;\;0,\;\;\;1,\,-1,\;\;\;0)$ & $(0,0,0,1)$\\ \hline
		\multirow{4}{*}{$(0,0,\tfrac{1}{2})$} & $(\;\;\;1,\;\;\;0,\;\;\;1,\;\;\;0,\;\;\;2,\,-1)$ & $(2,0,0,0)$\\
		& $(\;\;\;1,\;\;\;0,\;\;\;0,\;\;\;1,\;\;\;0,\,-1)$ & $(0,1,0,0)$\\
		& $(\;\;\;0,\;\;\;1,\;\;\;1,\;\;\;0,\;\;\;0,\,-1)$ & $(0,0,1,0)$\\
		& $(\;\;\;0,\;\;\;1,\;\;\;0,\;\;\;1,\,-2,\;\;\;1)$ & $(0,0,0,2)$\\ \hline
		\multirow{4}{*}{$(0,\tfrac{1}{2},\tfrac{1}{2})$} & $(\;\;\;1,\;\;\;0,\;\;\;2,\,-1,\;\;\;2,\,-1)$ & $(4,0,0,0)$\\
		& $(\;\;\;1,\;\;\;0,\;\;\;0,\;\;\;1,\;\;\;0,\,-1)$ & $(0,1,0,0)$\\
		& $(\;\;\;0,\;\;\;1,\;\;\;2,\,-1,\;\;\;0,\,-1)$ & $(0,0,2,0)$\\
		& $(\;\;\;0,\;\;\;1,\;\;\;0,\;\;\;1,\,-2,\;\;\;1)$ & $(0,0,0,2)$\\ \hline
		\multirow{4}{*}{$(\tfrac{1}{2},\tfrac{1}{2},\tfrac{1}{2})$} & $(\;\;\;2,\,-1,\;\;\;2,\,-1,\;\;\;2,\,-1)$ & $(8,0,0,0)$\\
		& $(\;\;\;2,\,-1,\;\;\;0,\;\;\;1,\;\;\;0,\,-1)$ & $(0,2,0,0)$\\
		& $(\;\;\;0,\;\;\;1,\;\;\;2,\,-1,\;\;\;0,\,-1)$ & $(0,0,2,0)$\\
		& $(\;\;\;0,\;\;\;1,\;\;\;0,\;\;\;1,\,-2,\;\;\;1)$ & $(0,0,0,2)$\\ \hline
	\end{tabular}
	\caption{Winding numbers and tadpole contributions of type $C$ branes on (un)tilted tori.}
	\label{tab:typeCwindings}
\end{table}

It is illuminating to classify branes by how many of $\widehat{X}^I,\widehat{Y}^I$ are positive, negative or zero; in Table~\ref{tab:classification} we list a classification of branes into eight types. A fully consistent model is necessarily comprised of only type $A$, $B$ and $C$ branes. That the only negative contributions to tadpoles come from type $A$ branes and the fact that these are necessarily accompanied by positive contributions to the three other tadpoles was a critical ingredient in the proof of \cite{Douglas2006} that the number of solutions to the above consistency conditions is finite. Type $C$ branes (a.k.a.~filler branes as coined in \cite{Cvetic:2002pj}) are distinguished in that they automatically satisfy the SUSY conditions and contribute (positively) to only one tadpole. The winding numbers and tadpole contributions for type $C$ branes on (un)tilted tori in our normalizations are given in Table~\ref{tab:typeCwindings}.


\subsection{Phenomenological properties}
\label{sec:pheno}
Also of interest are phenomenological properties of the resulting 4D effective theory. Both the gauge group and chiral spectrum may be found with relative ease. The gauge group living on a stack of $N_a$ D6-branes depends on its orientation relative to its orientifold image. Generically such a stack contributes a $\U(N_a)$ factor, but those stacks with $\widehat{Y}_a^I=0$ (types $C$,$C'$) contribute $\operatorname{USp}(N_a)$. In addition, most $\U(1)$ factors are anomalous, gaining a mass of order $gM_s$ where $g$ is the corresponding gauge coupling and $M_s$ the string scale. Those linear combinations which remain massless are given by the condition ($x_a\in\mathbb{R}$)
\begin{equation}
	\sum_a x_a\mathrm{U}(1)_a \quad\text{massless} \quad\Longleftrightarrow\quad \sum_a x_aN_a\widehat{Y}_a^I = 0 \,.
\end{equation}

Chiral matter is localized at brane intersections and fall into the representations listed in Table~\ref{tab:reps}. Replication of matter representations arises from stacks, their orientifold images and O6-planes intersecting multiple times on $X^6$, with the intersection numbers $I_{ab}$ being given by
\begin{equation}
\begin{aligned}
	I_{ab} &= [\Pi_a]\circ[\Pi_b] = \prod_{i=1}^3(n_a^im_b^i-m_a^in_b^i) = \prod_{i=1}^3(1-b_i)(n_a^i\widehat{m}_b^i-\widehat{m}_a^in_b^i)\\
	&= \prod_{i=1}^3(1-b_i) \times \sum_{I=0}^3\big(\widehat{X}_a^I\widehat{Y}_a^I - \widehat{Y}_a^I\widehat{X}_b^I\big) \,.
\end{aligned}
\end{equation}

\begin{table}[t]
	\renewcommand{\arraystretch}{1.35}
	\centering
	\begin{tabular}{|l|cccc|}
	    \hline
	    Representation & $(\mathbf{N}_a,\overline{\mathbf{N}}_b)$ & $(\mathbf{N}_a,\mathbf{N}_b)$ & $\square\!\square_a$ & \resizebox{\width}{1.8\height}{$\boxminus$}$_a$\\
	    Multiplicity & $I_{ab}$ & $I_{ab'}$ & $\frac{1}{2}(I_{aa'}-I_{a\text{O6}})$ & $\frac{1}{2}(I_{aa'}+I_{a\text{O6}})$\\ \hline
	\end{tabular}
	\caption{Chiral spectrum arising from intersections between brane stacks and their orientifold images.}
	\label{tab:reps}
\end{table}

In Sec.~\ref{sec:results} we will be looking for one particular four-stack realisation of the MSSM gauge group and chiral spectrum. If we label the four stacks by $a,b,c,d$ with gauge group $\U(3)_a\times\U(2)_b\times\U(1)_c\times\U(1)_d$, the number of families for the quarks and leptons are (e.g.\ see~\cite{Gmeiner:2005vz})
\begin{equation}
\begin{aligned}
    n_Q &= I_{ab} + I_{ab'} \,,\\
    n_u &= I_{a'c} + I_{a'd} \,,\\
    n_d &= I_{a'c'} + I_{a'd'} + \frac{1}{2}(I_{aa'} + I_{a\text{O6}}) \,,\\
    n_L &= I_{bc} + I_{bd} + I_{b'c} + I_{b'd} \,,\\
    n_e &= \frac{1}{2}(I_{cc'}-I_{c\text{O6}}) + \frac{1}{2}(I_{dd'}-I_{d\text{O6}}) + I_{cd'} \,.
\end{aligned}
\end{equation}
Weak hypercharge is given by the linear combination $Q_Y = \frac{1}{6}Q_a + \frac{1}{2}Q_c + \frac{1}{2}Q_d$ and remains massless if $\widehat{Y}_a^I + \widehat{Y}_c^I + \widehat{Y}_d^I = 0$. Coupling constants are given by
\begin{equation}
    \alpha_a \propto \sum_{I=0}^3\widehat{X}_a^I\widehat{U}_I > 0
\end{equation}
with $\alpha_s=\alpha_a$, $\alpha_w=\alpha_b$ and $(\alpha_Y)^{-1} = \frac{1}{6}(\alpha_a)^{-1} + \frac{1}{2}(\alpha_c)^{-1} + \frac{1}{2}(\alpha_d)^{-1}$. The constant of proportionality above depends on the complex structure moduli and string scale, but ratios of couplings are unambiguous. One can also consider the Weinberg angle, which is given by the rational combination
\begin{equation}
    \sin^2{\theta_\text{W}} = \frac{1}{1 + \frac{\alpha_w}{\alpha_Y}} \,.
\end{equation}


\subsection{Symmetries and winding number operations}
\label{sec:symsandoperations}
The encoding of a homology class as a tuple of six winding numbers is not unique: the sign of winding numbers is not uniquely fixed. For example, referring to the winding numbers $n^i,m^i$ of a stack collectively as $w^i$, the winding numbers
\begin{equation}
	w^i=(3,1,0,-1,-1,2) \quad \Longleftrightarrow \quad (-3,-1,0,-1,1,-2)
\end{equation}
describe the same homology class for any $b_i$. In general the $\widehat{X}^I$ and $\widehat{Y}^I$ are unaltered if the signs of two $(n,m)$ pairs are changed simultaneously; the map from winding numbers to homology class is 4-to-1. We can leverage this redundancy to choose a ``standard form'' for each stack's winding number signs. We opt for the simple rule of choosing the representation which is largest according to the dictionary order relation on $\mathbb{Z}^6$. For example, from
\begin{equation*}
	({-3},{-1},0,{-1},1,{-2}) < ({-3},{-1},0,1,{-1},2) < (3,1,0,{-1},{-1},2) < (3,1,0,1,1,{-2})
\end{equation*}
the standard form would be $w^i=(3,1,0,1,1,{-2})$.

There are also simple operations which flip the signs of the winding numbers on rectangular tori which extend naturally to the case of tilted tori:
\begin{equation}
\begin{aligned}
	\operatorname{flip}_n:\quad(n^i,m^i) &\mapsto (-n^i,m^i+2b_in^i) & \quad\text{i.e.}\quad (n^i,\widehat{m}^i) &\mapsto (-n^i,\widehat{m}^i) \,,\\
	\operatorname{flip}_{\widehat{m}}:\quad (n^i,m^i) &\mapsto (n^i,-m^i-2b_in^i) & \text{i.e.}\quad (n^i,\widehat{m}^i) &\mapsto (n^i,-\widehat{m}^i) \,,\\
	\operatorname{flip}_{n\widehat{m}}:\quad(n^i,m^i) &\mapsto (-n^i,-m^i) & \text{i.e.}\quad (n^i,\widehat{m}^i) &\mapsto (-n^i,-\widehat{m}^i) \,.
\end{aligned}
\end{equation}
Notice that even for tilted tori these do not alter the coprimality conditions, since $\gcd(n,m+n)=\gcd(n,m)$. Each such flip results in an even number each of $\widehat{X}^I$ and $\widehat{Y}^I$ changing signs; as such this is not necessarily a symmetry of the solutions but rather a natural operation that will prove useful in the following section.

\begin{algorithm}[t]
	\caption{Genetic algorithm with elitism}\label{alg:GAelite}
	\begin{algorithmic}
		\State $\pi_0 \gets \textsc{RandomPopulation}(n_\text{pop})$
		\For{$g \gets 1$\text{ to }$g_\text{max}$}
			\State $\pi_g \gets \{\}$
			\State \textbf{append} $n_\text{elite}$ fittest individuals of $\pi_{g-1}$ to $\pi_g$
			\While{$\textsc{Size}(\pi_g)<n_\text{pop}$}
				\State $p_1 \gets \textsc{BinaryTournament}(\pi_{g-1})$
				\State $p_2 \gets \textsc{BinaryTournament}(\pi_{g-1})$
				\State $c \gets \textsc{CrossOver}(p_1,p_2)$
				\State $c \gets \textsc{Mutate}(c)$
				\State $c \gets \textsc{Adjust}(c)$ \Comment{e.g.\ ensure coprime winding numbers}
				\State \textbf{append} $c$ to $\pi_g$
			\EndWhile
		\EndFor
	\end{algorithmic}
\end{algorithm}


\section{Genetic algorithm}
\label{sec:GA}

In this section we describe the structure of the genetic algorithm used to search for solutions to the consistency conditions discussed in Sec.~\ref{sec:review}. We use a genetic algorithm with elitism, as outlined in Alg.~\ref{alg:GAelite}, in which a population of $n_\text{pop}$ individuals $\zeta_i$, each consisting of a chromosome $\chi(\zeta_i)$ of integer-valued genes and fitness $\mathcal{F}(\zeta_i)$, breeds to produce subsequent generations. The fittest $n_\text{elite}$ individuals survive to the next generation, and to fill out the rest of the population children are produced one-by-one from pairs of randomly selected parents. We use binary tournaments to select parents, where from two randomly selected individuals the fittest of the two is selected to breed. The child's chromosome is formed using one of a collection of cross-over methods whereby the parents' genes are spliced together. Finally, random mutations are applied to the child before they are added to the next generation. In the remainder of this section we will discuss in more detail how each of these steps is realized for the problem at hand.


\subsection{Environments}
The environment in which the genetic algorithm operates is determined by a choice of torus shapes, $b_i\in\{0,\frac{1}{2}\}$, and moduli $\widehat{U}_I$. In addition, the classification discussed in Sec.~\ref{sec:braneclassification} suggests possible reductions of the search space:
\begin{itemize}
	\item $\env=1$: All brane types are allowed.
	\item $\env=2$: Type $A'$, $B'$ and $C'$ branes are changed to types $A$, $B$, $C$ by a reassignment of winding number signs only (see App.~\ref{app:adjustments}).
	\item $\env=3$: Same as $\env=2$, but in addition all type $C$ and $C'$ branes are removed from the chromosome. Type $C$ branes are then added for the purposes of evaluating an individual's fitness.
\end{itemize}
We also place bounds on the number and size of stacks that can be described by an individual so that the search space is finite-dimensional. We place no bounds on the winding numbers.

As demonstrated in Sec.~\ref{sec:review}, the moduli $\widehat{U}_I$ can always be taken to be positive, coprime integers. These moduli are fixed for each run of the genetic algorithm so that all individuals in the population are attempting to solve the same (fixed) SUSY conditions. This means that in order to find solutions with different values of the moduli we must scan through various choices, running the genetic algorithm a number of times for each. There remains the question of which $\widehat{U}_I$ admit consistent solutions. One can make a rough argument that there is an upper bound for these integer moduli of the form $\widehat{U}_I<\mathcal{O}(2^{14})$, and certainly one would expect that there are far fewer models to be found when the moduli are large or do not share many factors. We leave a more detailed analysis of such bounds and their implications for the number of solutions for the future: for the time being we will restrict attention to $\widehat{U}_I\leq 5$.


\subsection{Chromosome structure \& cross-over}
\label{sec:xover}

\begin{figure}[t]
	\centering
	\newcommand{\s}{0.7}
	\newcommand{\hh}{1}
	\newcommand{\vv}{2.4}
	\begin{tikzpicture}
		\node[right] at (\hh + 3.4+4*\s+1.5,-0.5) {\phantom{\textbf{1su:}}}; 
		\node[left] at (-\hh,-0.5) {\textbf{1su:}};
		\foreach \r in {0,...,4} {
		\fill[red!60!black] (0,-0.25*\r) rectangle (0.2,-0.25*\r-0.2);
		\foreach \c in {1,...,6} {
			\fill[red] (0.25*\c,-0.25*\r) rectangle (0.25*\c+0.2,-0.25*\r-0.2);
		}
		}
		\foreach \r in {0,...,6} {
		\fill[cyan!60!black] (1.7+2*\s,-0.25*\r) rectangle (1.7+2*\s+0.2,-0.25*\r-0.2);
		\foreach \c in {1,...,6} {
			\fill[cyan] (1.7+2*\s+0.25*\c,-0.25*\r) rectangle (1.7+2*\s+0.25*\c+0.2,-0.25*\r-0.2);
		}
		}
		\draw[very thick] (-0.1,-0.475) -- (1.8,-0.475);
		\draw[very thick] (1.7+2*\s-0.1,-1.225) -- (3.4+2*\s+0.1,-1.225);
		\foreach \r in {0,1} {
		\fill[red!60!black] (3.4+4*\s,-0.25*\r) rectangle (3.4+4*\s+0.2,-0.25*\r-0.2);
		\foreach \c in {1,...,6} {
			\fill[red] (3.4+4*\s+0.25*\c,-0.25*\r) rectangle (3.4+4*\s+0.25*\c+0.2,-0.25*\r-0.2);
		}
		}
		\foreach \r in {2,3} {
		\fill[cyan!60!black] (3.4+4*\s,-0.25*\r) rectangle (3.4+4*\s+0.2,-0.25*\r-0.2);
		\foreach \c in {1,...,6} {
			\fill[cyan] (3.4+4*\s+0.25*\c,-0.25*\r) rectangle (3.4+4*\s+0.25*\c+0.2,-0.25*\r-0.2);
		}
		}
		\node at (1.7+\s,-0.5) {$\otimes$};
		\node at (3.4+3*\s,-0.5) {$=$};
		
		\node[left] at (-\hh,-\vv-0.5) {\textbf{1gc:}};
		\foreach \r in {0,...,4} {
		\fill[red!60!black] (0,-\vv-0.25*\r) rectangle (0.2,-\vv-0.25*\r-0.2);
		\foreach \c in {1,...,6} {
			\fill[red] (0.25*\c,-\vv-0.25*\r) rectangle (0.25*\c+0.2,-\vv-0.25*\r-0.2);
		}
		}
		\foreach \r in {0,...,6} {
		\fill[cyan!60!black] (1.7+2*\s,-\vv-0.25*\r) rectangle (1.7+2*\s+0.2,-\vv-0.25*\r-0.2);
		\foreach \c in {1,...,6} {
			\fill[cyan] (1.7+2*\s+0.25*\c,-\vv-0.25*\r) rectangle (1.7+2*\s+0.25*\c+0.2,-\vv-0.25*\r-0.2);
		}
		}
		\draw[very thick] (-0.1,-1*\vv-0.225) -- (1.475,-1*\vv-0.225) -- (1.475,-1*\vv+0.025) -- (1.8,-1*\vv+0.025);
		\draw[very thick] (1.7+2*\s-0.1,-1*\vv-0.725) -- (1.7+2*\s+1.475,-1*\vv-0.725) -- (1.7+2*\s+1.475,-1*\vv-0.475) -- (3.4+2*\s+0.1,-1*\vv-0.475);
		\fill[red!60!black] (3.4+4*\s,-\vv) rectangle (3.4+4*\s+0.2,-\vv-0.2);
		\foreach \c in {1,...,5} {
			\fill[red] (3.4+4*\s+0.25*\c,-1*\vv) rectangle (3.4+4*\s+0.25*\c+0.2,-1*\vv-0.2);
		}
		\foreach \r in {1,...,4} {
		\fill[cyan!60!black] (3.4+4*\s,-1*\vv-0.25*\r) rectangle (3.4+4*\s+0.2,-1*\vv-0.25*\r-0.2);
		\foreach \c in {1,...,6} {
			\fill[cyan] (3.4+4*\s+0.25*\c,-1*\vv-0.25*\r) rectangle (3.4+4*\s+0.25*\c+0.2,-1*\vv-0.25*\r-0.2);
		}
		}
		\fill[cyan] (3.4+4*\s+0.25*6,-1*\vv) rectangle (3.4+4*\s+0.25*6+0.2,-1*\vv-0.2);
		\node at (1.7+\s,-\vv-0.5) {$\otimes$};
		\node at (3.4+3*\s,-\vv-0.5) {$=$};

		\node[left] at (-\hh,-2*\vv-0.5) {\textbf{2sc:}};
		\foreach \r in {0,...,4} {
		\fill[red!60!black] (0,-2*\vv-0.25*\r) rectangle (0.2,-2*\vv-0.25*\r-0.2);
		\foreach \c in {1,...,6} {
			\fill[red] (0.25*\c,-2*\vv-0.25*\r) rectangle (0.25*\c+0.2,-2*\vv-0.25*\r-0.2);
		}
		}
		\foreach \r in {0,...,6} {
		\fill[cyan!60!black] (1.7+2*\s,-2*\vv-0.25*\r) rectangle (1.7+2*\s+0.2,-2*\vv-0.25*\r-0.2);
		\foreach \c in {1,...,6} {
			\fill[cyan] (1.7+2*\s+0.25*\c,-2*\vv-0.25*\r) rectangle (1.7+2*\s+0.25*\c+0.2,-2*\vv-0.25*\r-0.2);
		}
		}
		\draw[very thick] (-0.1,-2*\vv-0.475) -- (1.8,-2*\vv-0.475);
		\draw[very thick] (-0.1,-2*\vv-0.975) -- (1.8,-2*\vv-0.975);
		\draw[very thick] (1.7+2*\s-0.1,-2*\vv-0.225) -- (3.4+2*\s+0.1,-2*\vv-0.225);
		\draw[very thick] (1.7+2*\s-0.1,-2*\vv-0.725) -- (3.4+2*\s+0.1,-2*\vv-0.725);
		\foreach \r in {0,1,4} {
		\fill[red!60!black] (3.4+4*\s,-2*\vv-0.25*\r) rectangle (3.4+4*\s+0.2,-2*\vv-0.25*\r-0.2);
		\foreach \c in {1,...,6} {
			\fill[red] (3.4+4*\s+0.25*\c,-2*\vv-0.25*\r) rectangle (3.4+4*\s+0.25*\c+0.2,-2*\vv-0.25*\r-0.2);
		}
		}
		\foreach \r in {2,3} {
		\fill[cyan!60!black] (3.4+4*\s,-2*\vv-0.25*\r) rectangle (3.4+4*\s+0.2,-2*\vv-0.25*\r-0.2);
		\foreach \c in {1,...,6} {
			\fill[cyan] (3.4+4*\s+0.25*\c,-2*\vv-0.25*\r) rectangle (3.4+4*\s+0.25*\c+0.2,-2*\vv-0.25*\r-0.2);
		}
		}
		\node at (1.7+\s,-2*\vv-0.5) {$\otimes$};
		\node at (3.4+3*\s,-2*\vv-0.5) {$=$};
		
		\node[left] at (-\hh,-3*\vv-0.5) {\textbf{2gu:}};
		\foreach \r in {0,...,4} {
		\fill[red!60!black] (0,-3*\vv-0.25*\r) rectangle (0.2,-3*\vv-0.25*\r-0.2);
		\foreach \c in {1,...,6} {
			\fill[red] (0.25*\c,-3*\vv-0.25*\r) rectangle (0.25*\c+0.2,-3*\vv-0.25*\r-0.2);
		}
		}
		\foreach \r in {0,...,6} {
		\fill[cyan!60!black] (1.7+2*\s,-3*\vv-0.25*\r) rectangle (1.7+2*\s+0.2,-3*\vv-0.25*\r-0.2);
		\foreach \c in {1,...,6} {
			\fill[cyan] (1.7+2*\s+0.25*\c,-3*\vv-0.25*\r) rectangle (1.7+2*\s+0.25*\c+0.2,-3*\vv-0.25*\r-0.2);
		}
		}
		\draw[very thick] (-0.1,-3*\vv-0.225) -- (0.475,-3*\vv-0.225) -- (0.475,-3*\vv+0.025) -- (1.8,-3*\vv+0.025);
		\draw[very thick] (-0.1,-3*\vv-0.725) -- (0.975,-3*\vv-0.725) -- (0.975,-3*\vv-0.475) -- (1.8,-3*\vv-0.475);
		\draw[very thick] (1.7+2*\s-0.1,-3*\vv-0.725) -- (1.7+2*\s+0.475,-3*\vv-0.725) -- (1.7+2*\s+0.475,-3*\vv-0.475) -- (3.4+2*\s+0.1,-3*\vv-0.475);
		\draw[very thick] (1.7+2*\s-0.1,-3*\vv-1.475) -- (1.7+2*\s+0.975,-3*\vv-1.475) -- (1.7+2*\s+0.975,-3*\vv-1.225) -- (3.4+2*\s+0.1,-3*\vv-1.225);
		\fill[red!60!black] (3.4+4*\s,-3*\vv) rectangle (3.4+4*\s+0.2,-3*\vv-0.2);
		\fill[red] (3.4+4*\s+0.25,-3*\vv) rectangle (3.4+4*\s+0.25+0.2,-3*\vv-0.2);
		\foreach \c in {2,...,6} {
			\fill[cyan] (3.4+4*\s+0.25*\c,-3*\vv) rectangle (3.4+4*\s+0.25*\c+0.2,-3*\vv-0.2);
		}
		\fill[cyan!60!black] (3.4+4*\s,-3*\vv-0.75) rectangle (3.4+4*\s+0.2,-3*\vv-0.75-0.2);
		\foreach \c in {1,2,3} {
			\fill[cyan] (3.4+4*\s+0.25*\c,-3*\vv-0.75) rectangle (3.4+4*\s+0.25*\c+0.2,-3*\vv-0.75-0.2);
		}
		\foreach \r in {1,2} {
		\fill[cyan!60!black] (3.4+4*\s,-3*\vv-0.25*\r) rectangle (3.4+4*\s+0.2,-3*\vv-0.25*\r-0.2);
		\foreach \c in {1,...,6} {
			\fill[cyan] (3.4+4*\s+0.25*\c,-3*\vv-0.25*\r) rectangle (3.4+4*\s+0.25*\c+0.2,-3*\vv-0.25*\r-0.2);
		}
		}
		\foreach \c in {4,5,6} {
			\fill[red] (3.4+4*\s+0.25*\c,-3*\vv-0.75) rectangle (3.4+4*\s+0.25*\c+0.2,-3*\vv-0.75-0.2);
		}
		\foreach \r in {4,5} {
		\fill[red!60!black] (3.4+4*\s,-3*\vv-0.25*\r) rectangle (3.4+4*\s+0.2,-3*\vv-0.25*\r-0.2);
		\foreach \c in {1,...,6} {
			\fill[red] (3.4+4*\s+0.25*\c,-3*\vv-0.25*\r) rectangle (3.4+4*\s+0.25*\c+0.2,-3*\vv-0.25*\r-0.2);
		}
		}
		\node at (1.7+\s,-3*\vv-0.5) {$\otimes$};
		\node at (3.4+3*\s,-3*\vv-0.5) {$=$};
	\end{tikzpicture}
	\caption{Schematic representation of examples for four of the eight cross-over method described in Sec.~\ref{sec:xover}. Each row in a chromosome contains the stack size (darker shade) and six winding numbers (ligher shade) of a brane stack. The chromosomes of parent 1 (red) and parent 2 (cyan) are split at either one or two randomly chosen point(s) and combined.}
	\label{fig:xovermethods}
\end{figure}
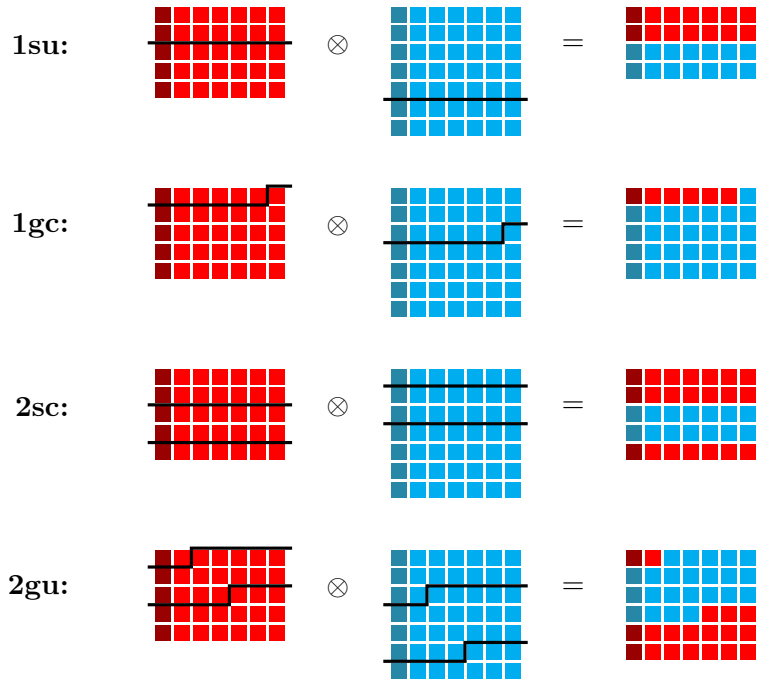

An individual $\zeta$ consists of a number of stacks $k\in\{k_\text{min},\ldots,k_\text{max}\}$ with each D6-brane stack consisting of a stack size $N_a\in\{1,\ldots,N_\text{max}\}$ and six integer winding numbers $n_a^i,m_a^i$ which are pairwise coprime. The chromosome can be organized into a $k\times 7$ matrix as follows:
\begin{equation}
	\chi(\zeta) = \left[\begin{array}{c|cccccc}
		N_1 & n_1^1 & m_1^1 & n_1^2 & m_1^2 & n_1^3 & m_1^3\\[5pt]
		N_2 & n_2^1 & m_2^1 & n_2^2 & m_2^2 & n_2^3 & m_2^3\\[5pt]
		\vdots & \vdots & \vdots & \vdots & \vdots & \vdots & \vdots\\[5pt]
		N_k & n_k^1 & m_k^1 & n_k^2 & m_k^2 & n_k^3 & m_k^3
	\end{array}\right]_{k\times 7}
\end{equation}
We employ eight cross-over methods to combine the chromosomes of the two selected parents. These methods differ in three regards; (i) the number of cross-over points (either one or two), (ii) where cross-over points are allowed to be (only between stacks or between any gene), and (iii) whether the number of stacks in the child is constrained to match that of one of the parents. These three binary choices give us eight methods, which we label by (i) \textbf{1}/\textbf{2}, (ii) \textbf{s}/\textbf{g} (stack/gene), and (iii) \textbf{c}/\textbf{u} (constrained/unconstrained). Four of these eight methods are exemplified in Fig.~\ref{fig:xovermethods}.


\subsection{Mutations}
\label{sec:mutations}
After a child's chromosome is created via cross-over, a number of mutations $\mu_i$ are potentially applied to it. These mutations provide small changes which help both to keep the population diverse and to make gradual improvements to the fitness. The following list of mutations are used:
\begin{itemize}
	\item \textbf{N:} Change a stack's size, $\mu_N(N) = N\pm 1$.
	\item \textbf{w:} Change a winding number, $\mu_w(w) = w\pm\{1,2\}$.
	\item $\boldsymbol{\pm}$\textbf{:} Randomize a stack's winding numbers' signs. See App.~\ref{app:mutations} for details.
	\item $\mathbf{\boldsymbol{\mathfrak{S}}_4}$\textbf{:} Permute a stack's $\widehat{X}^I,\widehat{Y}^I$. See App.~\ref{app:mutations} for details.
\end{itemize}
Because the number of stacks $k$ is variable, to each mutation $\mu_i$ is associated a rate $r_i$ which parametrizes the expected number of times this mutation is applied. That is, $\mu_w$ is applied to each winding number with probability $r_w/6k$, while $\mu_N$, $\mu_\pm$ and $\mu_{\mathfrak{S}_4}$ are applied to each stack with probabilities $r_N/k$, $r_\pm/k$ and $r_{\mathfrak{S}_4}/k$, respectively. As a heuristic, genetic algorithms perform best when the number of applied mutations is $\mathcal{O}(1)$, so we should anticipate the best results when $\sum_ir_i = \mathcal{O}(1)$.


\subsection{Adjustments}
Cross-over and application of mutations may result in a chromosome which is not ``valid'', i.e.\ falls outside the scope of the chosen environment. As a final step before being added to the population, the following checks are made in turn:
\begin{itemize}
	\item Non-empty stacks: each stack should have $N_a\geq 1$.
	\item Coprimality: each pair of winding numbers $(n,m)$ should satisfy $\gcd(n,m)=1$.
	\item Standardization: bring each stack's winding numbers to the ``standard form'' discussed in Sec.~\ref{sec:symsandoperations}.
	\item Allowed types: for $\env=2$ and $\env=3$ stacks of type $A'$, $B'$, $C'$ and $C$ are potentially changed or removed.
	\item No duplicates: stacks with identical winding numbers are combined.
\end{itemize}
See App.~\ref{app:adjustments} for more details on how such changes are implemented.


\subsection{Fitness function} 
The fitness function measures how close an individual is to having desirable properties. This includes tadpole cancellation (T), vanishing of K-theory charge (K) and preservation of supersymmetry (S), as well as MSSM-like properties (M). We take the fitness function to have the form
\begin{equation}
	\mathcal{F}(\zeta) = \mathcal{W}_\text{T}\mathcal{F}_\text{T}(\zeta) + \mathcal{W}_\text{K}\mathcal{F}_\text{K}(\zeta) + \mathcal{W}_\text{S}\mathcal{F}_\text{S}(\zeta) + \mathcal{W}_\text{M}\mathcal{F}_\text{M}(\zeta) \,,
\end{equation}
where each $\mathcal{F}_i$ lies in the interval $[0,1]$ with $\mathcal{F}_i=0$ being optimal. We can choose the non-negative weights $\mathcal{W}_i$ to add to one without loss of generality. The individual terms are taken to be of the following forms,
\begin{align}
    \mathcal{F}_\text{T}(\zeta) &= h\Big(\tfrac{\langle T^I\rangle}{\Delta_\text{T}}\Big) \,, & T^I &= \Big|\sum_aN_a\widehat{X}_a^I - 8\Big| \,,\\
    \mathcal{F}_\text{K}(\zeta) &= \langle K^I\rangle \,, & K^I &= \Big(\sum_a N_a\widehat{Y}_a^I\Big) \mod{2} \,,\\
    \mathcal{F}_\text{S}(\zeta) &= h\Big(\tfrac{\langle S_a\rangle}{\Delta_\text{S}}\Big) \,, & S_a &= \left|\min\left\{\frac{\sum_I\widehat{X}_a^I\widehat{U}_I}{\sum_I\widehat{U}_I},0\right\}\right| + \left|\frac{\sum_I \widehat{Y}_a^I/\widehat{U}_I}{\sum_I 1/\widehat{U}_I}\right| \,,\\
    \mathcal{F}_\text{M}(\zeta) &= G(\zeta) \,,
\end{align}
where the brackets $\langle\,\cdot\,\rangle$ indicate an average over either $I$ or $a$, as appropriate, and $h(z)=\frac{z}{1+z}$ tames the fitness for large tadpole or large $S_a$; the ``scales'' $\Delta_\text{T}$ and $\Delta_\text{S}$ control the points at which the nonlinearity sets in. Note that it is important to average over $S_a$ rather than take a sum because the number of stacks each individual has is variable; a sum would artificially favor individuals with fewer stacks. The function $G$ measures how close the individual is to having MSSM-like characteristics. We take perhaps the simplest choice and choose $G$ to measure how far the gauge group is from containing a $\U(3)\times\U(2)\times\U(1)\times\U(1)$ factor:
\begin{equation}
    G(\zeta) = \frac{1}{4}\Big(\max\!\big\{1-\#\U(3),0\big\} + \max\!\big\{1-\#\U(2),0\big\} + \max\!\big\{2-\#\U(1),0\big\}\Big) \,.
\end{equation}
The function $G$ can also be tailored to weight additional features such as number of families, number of exotics, etc.

We will refer to individuals or models collectively in terms of which of the three consistency conditions they satisfy. For example, KS models satisfy the K-theory and SUSY conditions but have one or more uncancelled tadpoles, while TKS models are fully consistent.


\subsection{Summary}
For each instance of the genetic algorithm described above there are several parameters which need to be chosen. These fall into two categories, which we will refer to as ``environment parameters'' and ``hyper-parameters'':
\begin{itemize}
    \item Environment parameters: $b_i$, $\widehat{U}_I$, $k_\text{min}$, $k_\text{max}$, $N_\text{max}$, $\env$
    \item Hyper-parameters: $n_\text{pop}$, $n_\text{elite}$, $g_\text{max}$, $p_i$, $r_i$, $\mathcal{W}_i$, $\Delta_\text{T,S}$
\end{itemize}
Environment parameters are chosen by hand and to search the landscape one should conduct runs for various choices of $b_i$ and $\widehat{U}_I$. The hyper-parameters should be chosen to maximize the genetic algorithm's efficiency in finding the desired solutions. In App.~\ref{app:optimization} we discuss the process by which these hyper-parameters are chosen to be
\begin{equation}\label{eq:optparams}
\begin{aligned}
    (n_\text{pop}, n_\text{elite}, g_\text{max}) &= (250,\; 25,\; 100)\\
    (p_\textbf{1sc}, p_\textbf{1su}, p_\textbf{1gc}, p_\textbf{1gu}) &= (0.19,\; 0.05,\; 0.29,\; 0.07)\\
    (p_\textbf{2sc}, p_\textbf{2su}, p_\textbf{2gc}, p_\textbf{2gu}) &= (0.13,\; 0.03,\; 0.19,\; 0.05)\\
    (r_N,r_w,r_\pm,r_{\mathfrak{S}_4}) &= (0.07,\; 0.28,\;0.17,\;0.19)\\
    (\mathcal{W}_\text{T},\mathcal{W}_\text{K},\mathcal{W}_\text{S}) &= (1-\mathcal{W}_\text{M})\times (0.28,\; 0.05,\; 0.67)\\
    (\Delta_\text{T},\Delta_\text{S}) &= (21,\;0.90)
\end{aligned}
\end{equation}


\section{Results}
\label{sec:results}

\begin{figure}[t]
    \centering
    \includegraphics[width=0.8\textwidth]{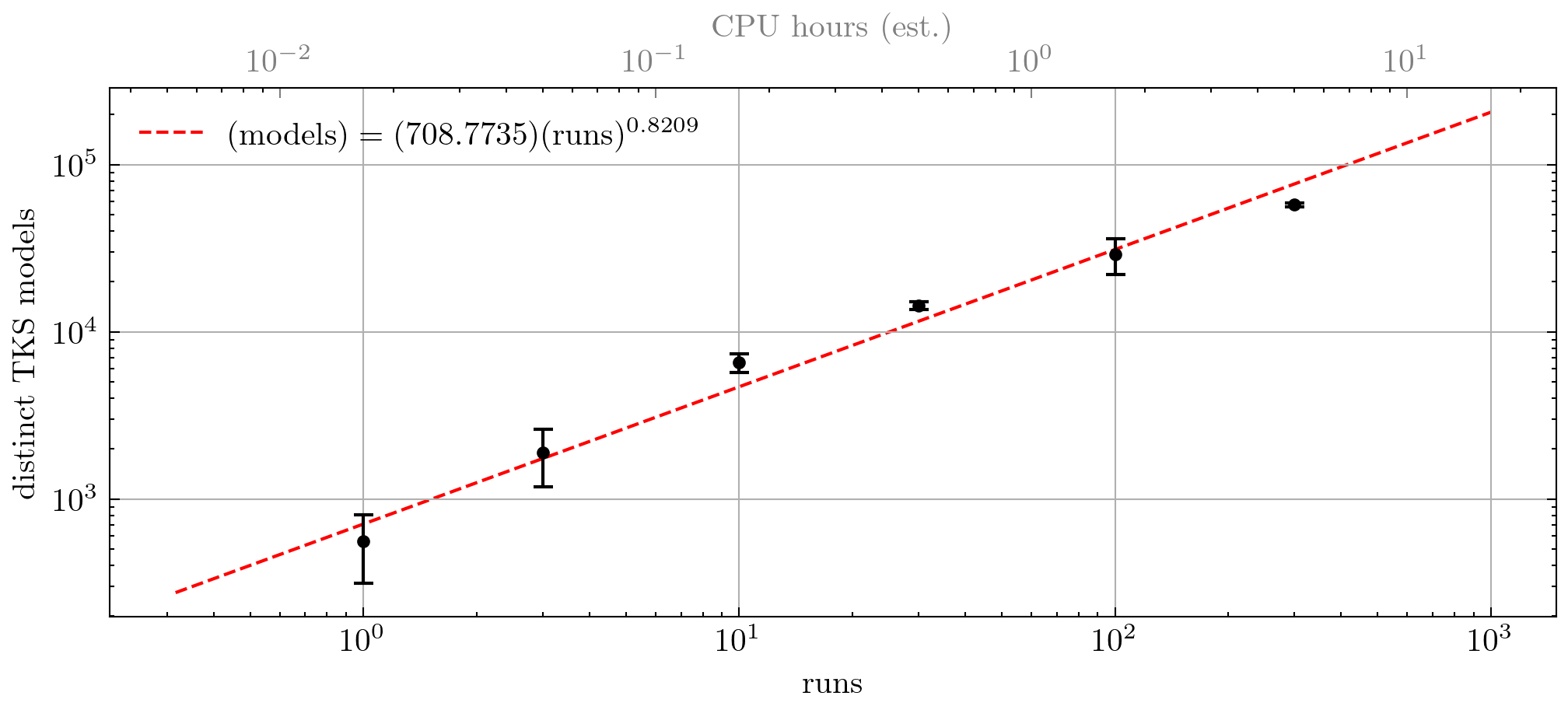}
    \caption{Number of distinct models models found for $b_i=0$, $\widehat{U}_I=1$, $\env=3$ as a function of the number of runs (along with an estimate of the corresponding number of CPU hours). The number of models grows sublinearly due to a growing chance of finding duplicate models.}
    \label{fig:modelsfound}
\end{figure}

In this section we present the three ensembles of models found during three searches using the hyper-parameters of Eqn.~\eqref{eq:optparams}. The remaining parameters are chosen to be
\begin{equation*}\arraycolsep=10pt
    \begin{array}{l|ccccc}
        & \env & \mathcal{W}_\text{M} & k_\text{min} & k_\text{max} & N_\text{max}\\[3pt] \hline
        \text{Search I} & 1 & 0 & 3 & 10 & 10\\
        \text{Search II} & 3 & 0 & 3 & 10 & 10\\
        \text{Search III} & 3 & \phantom{.05}0.05 & 4 & 10 & 10\\
    \end{array}
\end{equation*}
For each of search I, II and III we scan over all 1052 inequivalent combinations of $b_i$ and $\widehat{U}_I\leq 5$,\footnote{Take $b_i$ weakly increasing and if $b_i=b_j$ with $i<j$, then $\widehat{U}_i\leq \widehat{U}_j$.} and perform $1000$ runs for each. The number of distinct models (e.g.\ accounting for permutations of stacks in the chromosomes) found for each search are given in the following table:
\begin{equation*}
    \begin{tabular}{l|cccc}
         & KS models & TKS models & TKS models w/ $G(\zeta)=0$\\ \hline
        Search I & 82,799,762 & \phantom{3,}110,783 & \phantom{726,}698\\
        Search II & -- & 3,172,276 & 107,101\\
        Search III & -- & 3,069,750 & 726,837
    \end{tabular}
\end{equation*}

\begin{figure}[t]
    \centering
    \includegraphics[width=\textwidth]{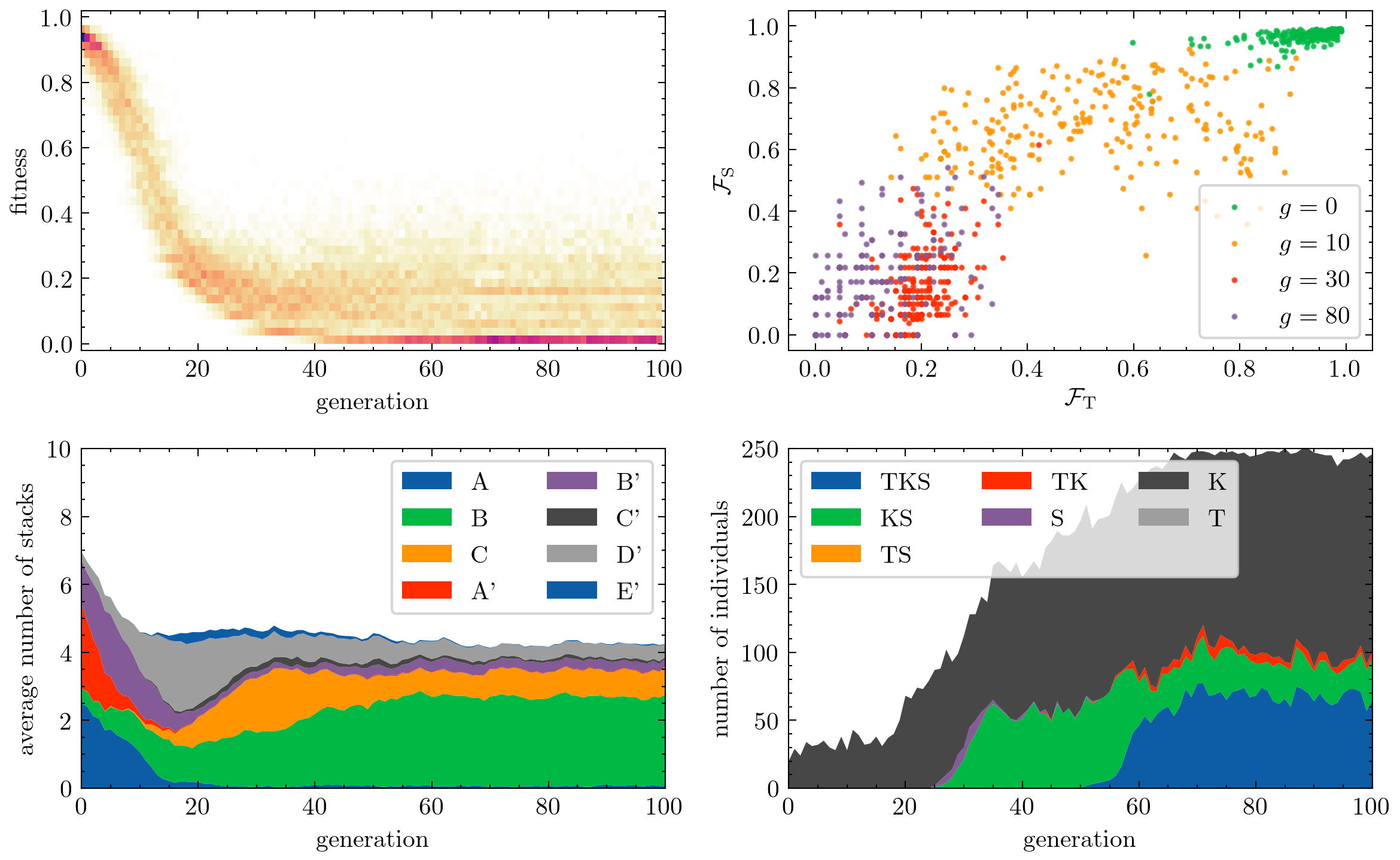}
    \caption{Evolution of a population during a typical run of the genetic algorithm: distribution of fitnesses (top-left), distribution tadpole and supersymmetry fitnesses (top-right), proportion of branes of each type (bottom-left), and number of models in the population satisfying one or more of the three consistency conditions (bottom-right).}
    \label{fig:singlerun}
\end{figure}

The structure of genetic algorithms is well-suited for parallelization. Crossover, mutation and evaluation of fitness may be run in parallel for each generation, and also many instances of the genetic algorithm may be run in parallel with their own populations evolving independently. A single run takes $\mathcal{O}(1)$ CPU minutes. Fig.~\ref{fig:modelsfound} displays the number of distinct models found for search II with $b_i=0$ and $\widehat{U}_I=1$ as the number of runs is increased.


\subsection{A typical run \& learning strategy}
\label{sec:learning}

Let us first discuss a typical run of the genetic algorithm. By looking at how the characteristics of the population change over the generations, one may learn about the strategy employed by the genetic algorithm when the parameters have been chosen optimally. In Fig.~\ref{fig:singlerun} are presented the results of a single run of search I. The first SUSY-preserving individual appears at generation 25 and the first fully consistent individual appears at generation 51. After 100 generations a total of 20 distinct TKS solutions have been found.

We can understand the learning strategy in a straightforward way by considering the evolution of the individual tadpole and SUSY terms in the fitness function and appearance of type $C$ branes. There is a ``burn-in'' period where the fitnesses dive from their initial values of $\mathcal{F}\approx 1$ from random initialization. The SUSY conditions apply to each brane stack individually, so there is always an incentive to have all stacks in the population satisfy these conditions, even during these early generations. Type $C$ branes appear and become commonplace in the population around the time the first SUSY-preserving individual is found. The tadpole conditions are subsequently solved as type $C$ branes are advantageously spread through the population via cross-over.


\subsection{Landscape statistics}

\afterpage{
\clearpage
\begin{figure}[t]
    \centering
    \includegraphics[width=\textwidth]{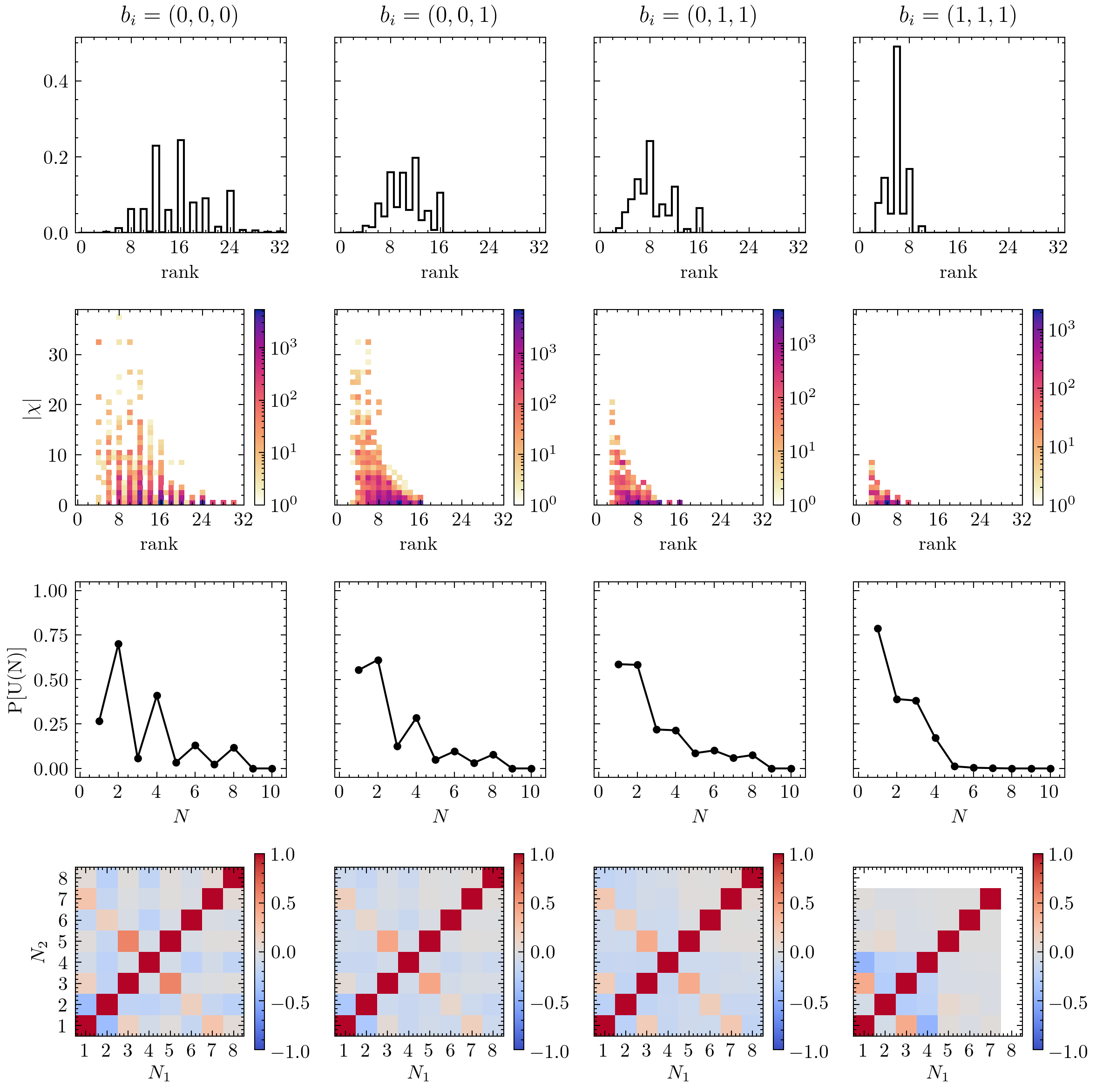}
    \caption{Statistics for search I. Each column corresponds to a different number of tilted tori. (Top row) Distributions of ranks. (Second row) Distributions of mean chirality vs.\ rank. (Third row) Proportion of models containing $\U(N)$ factors in their gauge group. (Bottom row) Correlations between the number of $\U(N_1)$ and $\U(N_2)$ factors.}
    \label{fig:searchI_TKSstats}
\end{figure}
\clearpage
}

\afterpage{
\clearpage
\begin{figure}[t]
    \centering
    \includegraphics[width=\textwidth]{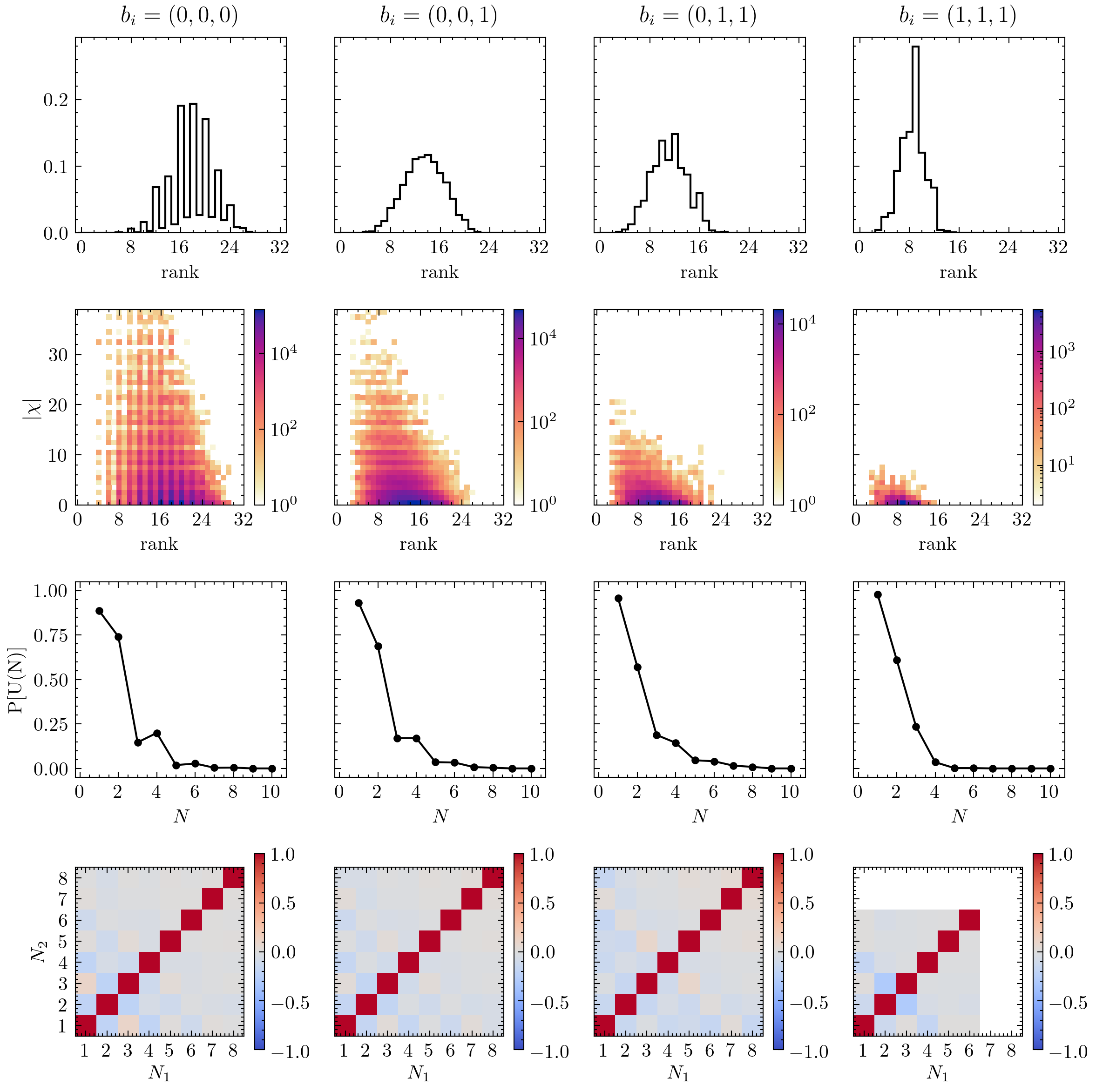}
    \caption{Statistics for search II. Each column corresponds to a different number of tilted tori. (Top row) Distributions of ranks. (Second row) Distributions of mean chirality vs.\ rank. (Third row) Proportion of models containing $\U(N)$ factors in their gauge group. (Bottom row) Correlations between the number of $\U(N_1)$ and $\U(N_2)$ factors.}
    \label{fig:searchII_TKSstats}
\end{figure}
\clearpage
}

With both search I and search II the genetic algorithm does not weight MSSM-like properties in the fitness function and only aims to satisfy the three consistency conditions. For the ensemble of representative models they find we may consider various statistics and their correlations. Figs.~\ref{fig:searchI_TKSstats} and \ref{fig:searchII_TKSstats} show the results for search I and search II, with the models corresponding to different numbers of tilted tori shown separately. The distribution of gauge group ranks has a strong suppression of odd rank when all tori are rectangular (as was observed in, e.g.,~\cite{Gmeiner:2005vz}). In addition, the distribution of ranks shifts to smaller and smaller values as the number of tilted tori increases. For tilted tori the tadpole conditions become more stringent and generally lead to smaller stack sizes, $N_a$, and thus smaller rank. There are only small correlations between the numbers of different $\U(N)$ factors, appearing in a distinctly checkerboard pattern due to the explicit eights in the tadpole conditions. There are no qualitative differences between the ensembles found during search I and search II.

Given the nature of the genetic algorithm that we use, we have access not only to fully consistent models but also those that are partially consistent which appear through the generations. This allows us to consider how the statistics change as the consistency conditions are satisfied one-by-one. In light of the learning strategy observed in Sec.~\ref{sec:learning}, the dependence of statistics for KS models on the margin by which the tadpole cancellation is violated is of particular interest. Fig.~\ref{fig:searchI_KSstats} shows how both the distribution of ranks and $\U(N)$ factors depends on $\langle T^I\rangle$ for a subset of KS models found during search I. The only qualitative change for $\langle T^I\rangle>0$ seems to be a wider spread in gauge group ranks, which is easily understood since the values of $N_a$ are less restricted by the tadpole conditions.


\subsection{MSSM-like statistics}
While the genetic algorithm has been successful in sampling the landscape of consistent models, the parameters for search III allow it to focus on consistent models which are not necessarily representative of the landscape as a whole. With $\mathcal{W}_M>0$ it will preferentially search for models which have a $\U(3)\times\U(2)\times\U(1)\times\U(1)$ gauge group factor \emph{in addition to} satisfying the three consistency conditions. Indeed, a large proportion TKS models now contain the desired gauge group factor.

Having found a large number of fully consistent models with a $\U(3)\times\U(2)\times\U(1)\times\U(1)$ gauge group factor, we may now ask about the more fine-grained phenomenological characteristics discussed in Sec.~\ref{sec:pheno}. For those models with more than one way to identify the four-stack MSSM gauge group we select the labeling which minimizes the variance in the number of quark families and lepton families (i.e.\ where $n_Q,n_u,n_d$ are most similar and $n_L,n_e$ are most similar).

Ratios of coupling constants are determined only by topological data, and their distribution is shown in Fig.~\ref{fig:couplings}. There is a mild positive correlation between $\alpha_Y/\alpha_s$ and $\alpha_w/\alpha_s$. For those models which have both $n_Q=n_u=n_d$ and $n_L=n_e$, the distribution of the numbers of families is shown in Fig.~\ref{fig:quarklepton}. There we see that for untilted tori ($b_i=0$) there are only ever an even number of families. With tilted tori odd numbers of families are possible and while we do find models with three families of quarks and three families of leptons, we have not found any with three of each simultaneously. It would be interesting to include these additional characteristics in the fitness function and further skew the sampling towards these phenomenologically exciting models: we leave such investigation for the future.

\begin{figure}[t]
    \centering
    \includegraphics[width=0.8\textwidth]{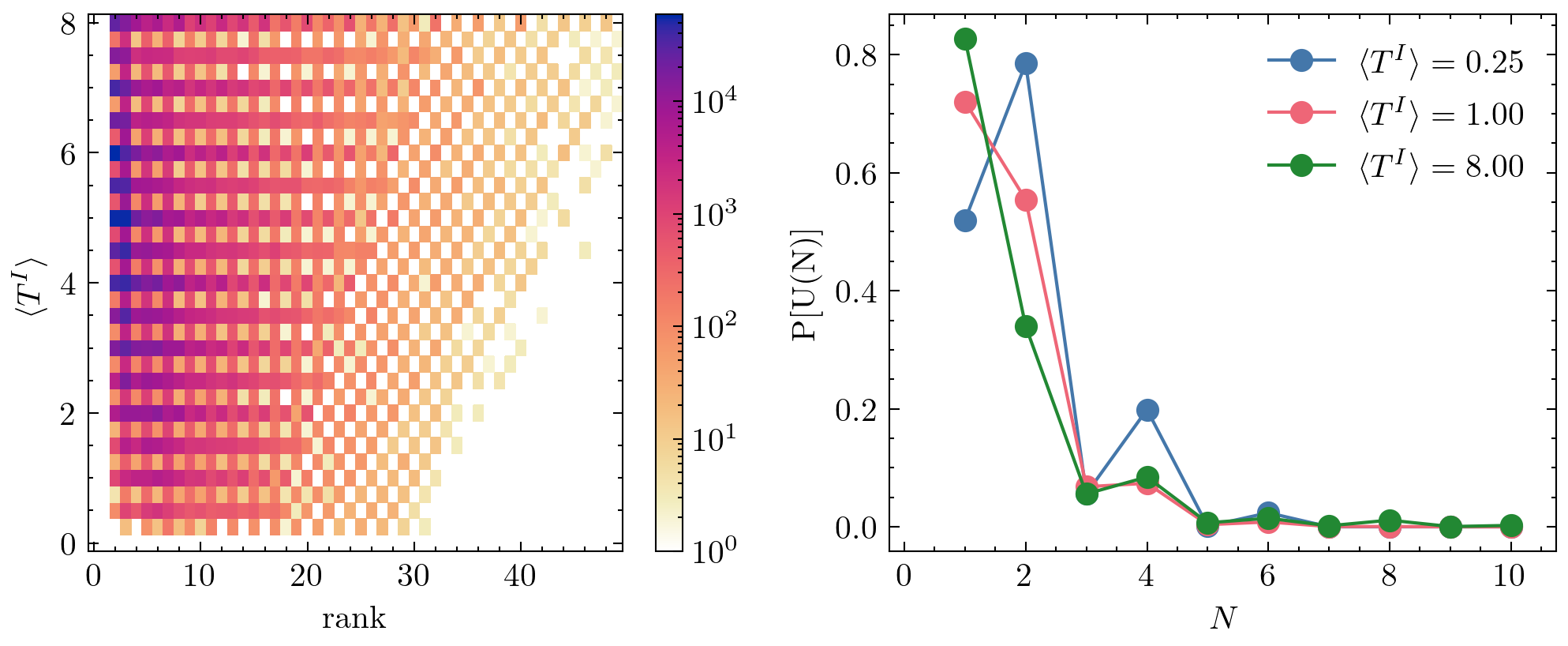}
    \caption{Dependence of gauge group rank and presence of $\U(N)$ gauge factors on degree of tadpole cancellation violation for a subset of 2,000,000 KS models found during search I.}
    \label{fig:searchI_KSstats}
\end{figure}

\begin{figure}[t]
    \centering
    \includegraphics[width=0.55\textwidth]{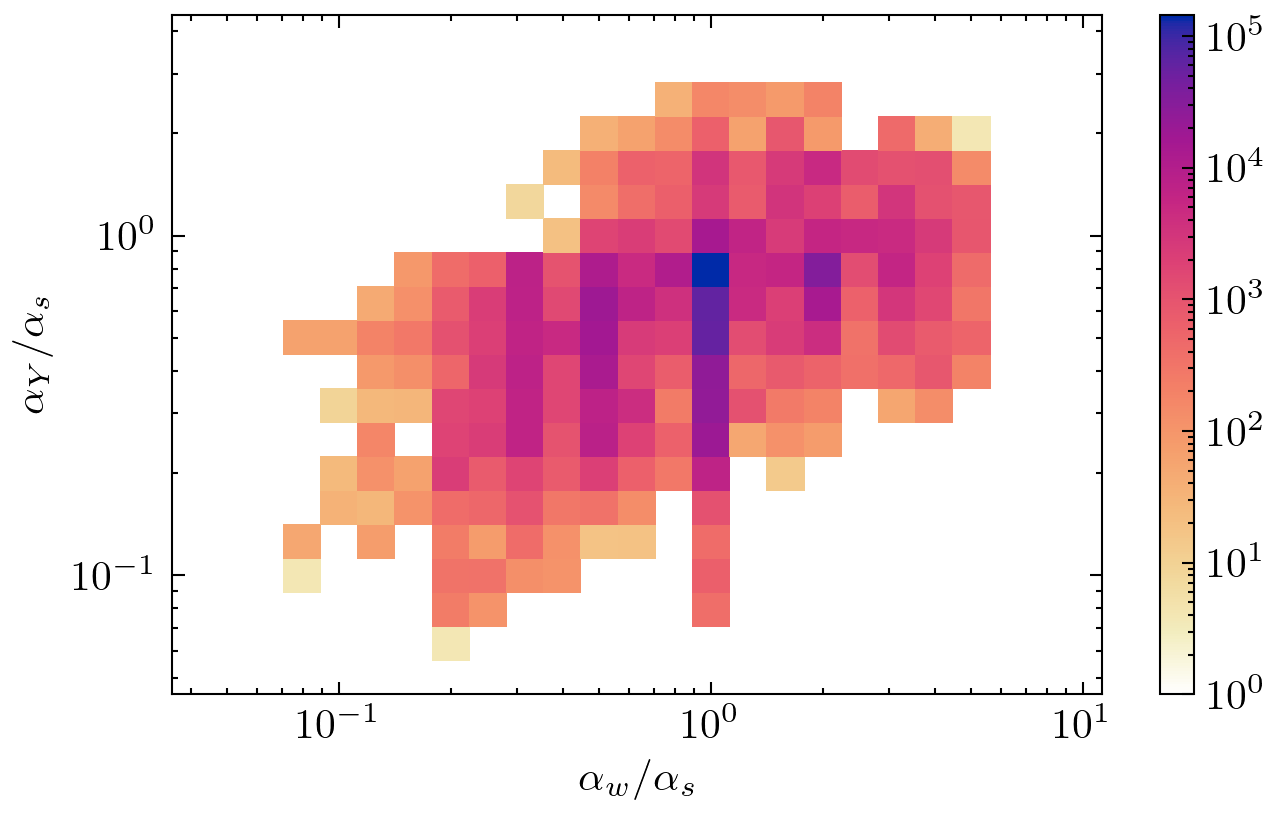}
    \caption{Distribution of coupling constant ratios for TKS models with MSSM gauge group factor found during search III.}
    \label{fig:couplings}
\end{figure}

\begin{figure}[t]
    \centering
    \includegraphics[width=0.9\textwidth]{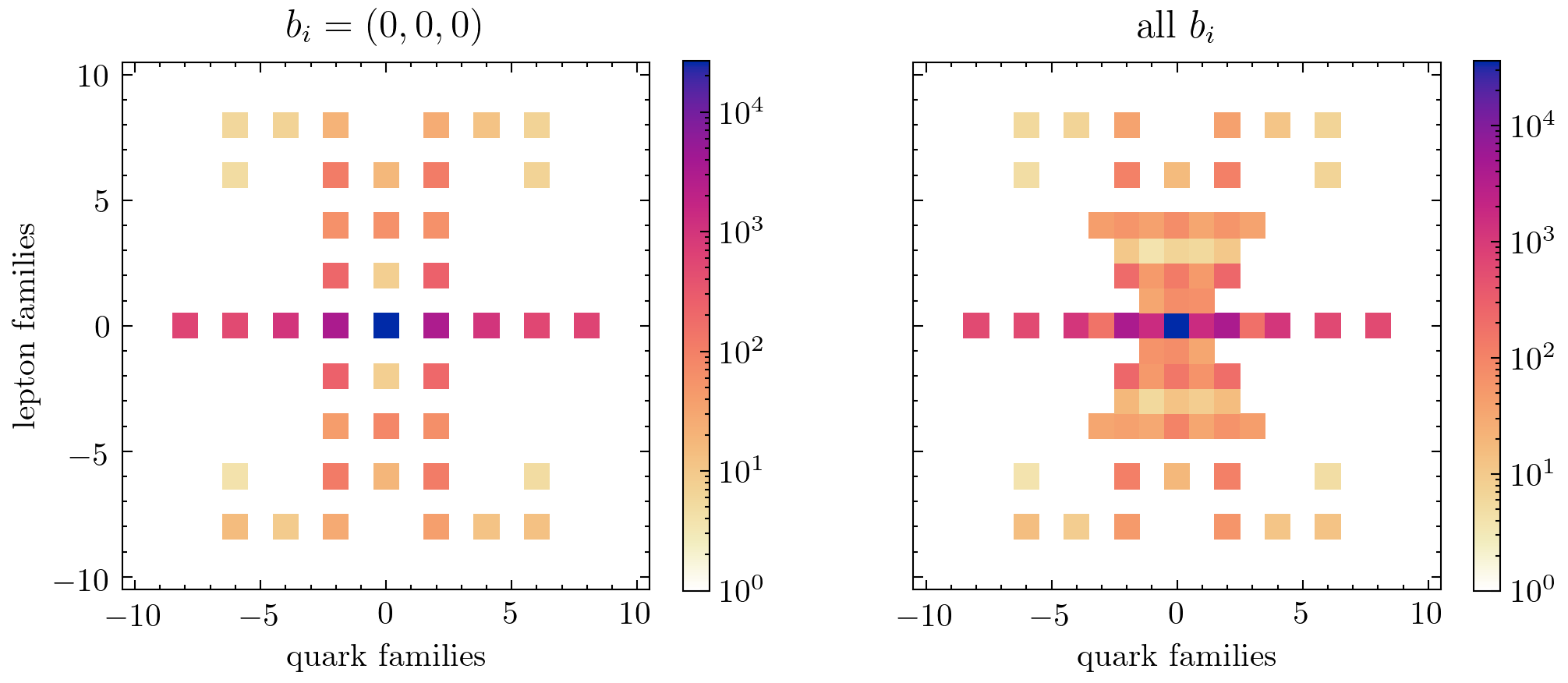}
    \caption{Distribution of number of families of quarks and leptons for TKS models with MSSM gauge group and both $n_Q=n_u=n_d$ and $n_L=n_e$. For all tori rectangular there are only ever an odd numbers of families.}
    \label{fig:quarklepton}
\end{figure}


\section{Discussion}
\label{sec:discussion}

In this paper we have demonstrated the utility of using genetic algorithms to explore the landscape of intersecting brane models for a particular type IIA toroidal orbifold compactification. Genetic algorithms lend themselves well to searching for the integer-valued models that live in the high-dimensional space describing intersecting branes. The purpose of conducting our genetic-based optimization in this simple orientifold is to learn {\it how} the algorithm learns to find optimal solutions. The methodology we developed here is directly transportable to other intersecting brane models. Moreover, Lemma \ref{positive-coprime-lemma} we proved in Section \ref{sec:review} applies to any search algorithm and can overcome the bottleneck encountered in previous computer searches for consistent branes.

We have conducted an initial foray into the landscape of consistent models by conducting three searches of slightly different nature. We were able to peer inside the genetic algorithm and understand the learning strategy used to solve the tadpole and SUSY conditions one-by-one. Large ensembles of consistent models provide a statistical way to understand the landscape, and we discussed several easily computable statistics which give a sense of the typical models the genetic algorithm is likely to find. It is an interesting question to determine how representative the sample of models found by the genetic algorithm is amongst all possible models. In particular, how does its searching compare with other machine learning techniques? We leave comparisons such as this (along the lines of~\cite{Cole:2021nnt,Abel:2021ddu}) for future investigations. Furthermore, our genetic algorithm can efficiently generate a sizable sample of intersecting brane models with the MSSM gauge factor. This allows us to explore the statistics of supersymmetry breaking as well the fine-tuning problem of electroweak symmetry breaking along similar veins as \cite{Susskind:2004uv,Douglas:2004qg,Dine:2004is,
Marchesano:2004yn}. We hope to report our findings in future publications.


\acknowledgments
We thank Alex Cole, Sven Krippendorf and Andreas Schachner for discussions.
The computations in this research were performed using the compute resources and assistance of the UW-Madison Center For High Throughput Computing (CHTC) in the Department of Computer Sciences. The CHTC is supported by UW-Madison, the Advanced Computing Initiative, the Wisconsin Alumni Research Foundation, the Wisconsin Institutes for Discovery, and the National Science Foundation, and is an active member of the OSG Consortium, which is supported by the National Science Foundation and the U.S.\ Department of Energy's Office of Science. The work of GL and GS is supported in part by the DOE grant DE-SC0017647 and the Kellett Award of the University of Wisconsin.


\appendix

\section{Genetic algorithm details}
\label{app:GA}

\subsection{Initialization}
\label{app:initialization}

The seed population consists of $n_\text{pop}$ random individuals. The number of stacks for each individual is chosen uniformly from $\{k_\text{min},\ldots,k_\text{max}\}$, and the genes in each stack are chosen according to the following distributions:\footnote{If $Y\sim\operatorname{Pois}(\mu_1)$ and $Z\sim\operatorname{Pois}(\mu_2)$ are Poisson random variables, then $X\equiv(Y-Z)\sim\operatorname{Skellam}(\mu_1,\mu_2)$ is a Skellam random variable. $X$ has mean $\mu_1-\mu_2$, variance $\mu_1+\mu_2$ and is supported on $\mathbb{Z}$.}
\begin{equation}
	N_a \sim \operatorname{Geom}(\tfrac{1}{3}) \,, \qquad w_a^i \sim \operatorname{Skellam}(10,10) \,.
\end{equation}
Results are insensitive to this particular choice.


\subsection{Mutations}
\label{app:mutations}
Here we give further details for the mutations $\mu_\pm$ and $\mu_{\mathfrak{S}_4}$, introduced in Sec.~\ref{sec:mutations}.

\begin{itemize}
	\item $\boldsymbol{\pm}$\textbf{:} Change a stack's winding number's signs. As discussed in Sec.~\ref{sec:symsandoperations}, there are natural `flip' operations which change the signs of $n,\widehat{m}$ rather than $n,m$. Write
	\begin{equation}
		\mu_\pm(w^i) = \big(f_1(n^1,m^1),f_2(n^2,m^2),f_3(n^3,m^3)\big) \,,
	\end{equation}
	where each $f_i$ is chosen uniformly from $\{\mathbb{I},\operatorname{flip}_n,\operatorname{flip}_{\widehat{m}},\operatorname{flip}_{n\widehat{m}}\}$.
	\item $\mathbf{\boldsymbol{\mathfrak{S}}_4}$\textbf{:} Permute a stack's $\widehat{X}^I,\widehat{Y}^I$ via the following procedure. It is helpful to think of the permutation group on $I=0,1,2,3$ as $\mathfrak{S}_4\cong\mathfrak{S}_3\times\mathbb{V}$, where $\mathbb{V}\cong\mathbb{Z}_2\times\mathbb{Z}_2$ is the Klein four-group. First, pick an element of $\mathbb{V}=\{\mathbb{I},\rho_1,\rho_2,\rho_3\}$ uniformly at random, where, for example, $\rho_1$ is given by
	\begin{equation}
		\rho_1(w^i) = \big({-n^1},{-m^1},\widehat{m}^2,{-n^2}-2b_2m^2,\widehat{m}^3,{-n^3}-2b_3m^3\big) \,,
	\end{equation}
	and with similar expressions for $\rho_2$ and $\rho_3$: the above $\rho_1$ implements the swaps $\widehat{X}^0\leftrightarrow\widehat{X}^1$, $\widehat{X}^2\leftrightarrow\widehat{X}^3$, $\widehat{Y}^0\leftrightarrow\widehat{Y}^1$ and $\widehat{Y}^2\leftrightarrow\widehat{Y}^3$. Next, pick $\sigma\in\mathfrak{S}_3$ and apply
	\begin{equation}
		(n^i,m^i) \mapsto \begin{cases}
			(n^{\sigma(i)},m^{\sigma(i)}) & b_i=b_{\sigma(i)}\\
			(n^{\sigma(i)},2m^{\sigma(i)}+n^{\sigma(i)}) & b_i=0\,,\;\; b_{\sigma(i)}=\frac{1}{2}\\
			\big(n^{\sigma(i)},\frac{m^{\sigma(i)}-n^{\sigma(i)}}{2}\big) & b_i=\frac{1}{2}\,,\;\; b_{\sigma(i)}=0
		\end{cases}
	\end{equation}
	Notice that the last case above may result in a half-integer winding number. This happens at most once, and if it does simply round up or down at random. The following are two examples applied to the same initial winding numbers:
	\begin{align}
	&\begin{aligned}
		&\!\!\!\!\underline{b_i=(0,0,0)}\\
		w^i &= (3,{-1},1,{-1},2,1) &\xrightarrow{\rho_1}\quad &({-3},1,{-1},{-1},1,{-2}) &\xrightarrow{(13)}\quad &(1,{-2},{-1},{-1},{-3},1)\\
		\widehat{X}^I &= (6,3,1,{-2}) & &(3,6,{-2},1) & &(3,1,{-2},6)\\
		\widehat{Y}^I &= (1,2,6,{-3}) & &(2,1,{-3},6) & &(2,6,{-3},1)
	\end{aligned}\\[10pt]
	&\begin{aligned}
		&\!\!\!\!\underline{b_i=(0,0,\tfrac{1}{2})}\\
		w^i &= (3,{-1},1,{-1},2,1) &\xrightarrow{\rho_2}\quad &({-1},{-3},{-1},1,4,{-3}) &\xrightarrow{(123)}\quad &({-1},1,4,{-2},{-1},-1)\\
		\widehat{X}^I &= (6,12,4,{-2}) & &(4,{-2},6,12) & &(4,6,12,{-2})\\
		\widehat{Y}^I &= (4,2,6,{-12}) & &(6,{-12},4,2) & &(6,4,2,{-12})
	\end{aligned}
	\end{align}
	Notice that in the second example above with tilted tori the winding numbers have changed value in order to keep the values of $\widehat{X}^I,\widehat{Y}^I$ the same. As a result the final values of $n^2,m^2$ are not coprime: this would be amended during the adjustment step before the child is added to the population.
\end{itemize}


\subsection{Adjustments}
\label{app:adjustments}
The final step before a child is added to the population is to adjust the chromosome to ensure that it remains within the environment. In particular, stack sizes cannot be zero or negative, winding number pairs must be coprime, and for $\env=2,3$ some brane types must be changed or removed. For each stack, if $N_a<1$ then set $N_a\to1$. If $(n,m)$ are not coprime, then there are a few cases to be corrected:

\begin{itemize}
	\item Replace $(0,0)$ with one of $(1,0),(-1,0),(0,1),(0,-1)$.
	\item Replace $(0,m)$, $|m|>1$ with one of $(1,m),(-1,m),(0,m/|m|)$.
	\item Replace $(n,0)$, $|n|>1$ with one of $(n,1),(n,-1),(n/|n|,0)$.
	\item If $n,m\neq0$ with $\gcd(n,m)\neq1$, pick either $n,m$ and increment/decrement until $\gcd(n,m)=1$. It is easy to see that one never arrives at one of the above three cases where further changes would be needed. For example:
	\begin{equation}
	\begin{aligned}
		&(n,m)=(4,2) \qquad \text{decrement }n \quad & &\Longrightarrow \qquad (4,2)\to(3,2)\\
		&(n,m)=(6,2) \qquad \text{increment }m \quad & &\Longrightarrow \qquad (6,2)\to(6,3)\to(6,4)\to(6,5)
	\end{aligned}
	\end{equation}
\end{itemize}
Finally, for $\env=2,3$ make the following sign changes:
\begin{itemize}
	\item Type $A',C'$: flip signs of all winding numbers.
	\item Type $B'$: if both nonzero $\widehat{X}^I$ are negative, then simply flip signs of all winding numbers. Otherwise, there are three cases:
	\begin{itemize}
	    \item If $\widehat{X}^0=0$, apply $\operatorname{flip}_{\widehat{m}}$ to the $n^i,m^i$ pair for which $\widehat{X}^i>0$.
	    \item If $\widehat{X}^0>0$, apply $\operatorname{flip}_{\widehat{m}}$ to an $n^i,m^i$ pair for which $\widehat{X}^i=0$.
	    \item If $\widehat{X}^0<0$, apply $\operatorname{flip}_{n}$ to an $n^i,m^i$ pair for which $\widehat{X}^i=0$.
	\end{itemize}
\end{itemize}


\section{Hyper-parameter optimization}
\label{app:optimization}

The genetic algorithm described in Sec.~\ref{sec:GA} has a number of parameters that need to be chosen to maximize its success in finding fully consistent solutions. In this appendix we fix the environment parameters to
\begin{equation}
    b_i = 0 \,, \qquad \widehat{U}_I = 1 \,, \qquad (k_\text{min},\,k_\text{max},\,N_\text{max}) = (3,\,10,\,10) \,, \qquad \env = 1 \,,
\end{equation}
and discuss the optimization of the hyperparameters. We take the quantity
\begin{equation}
    \mathcal{E} = \frac{\#\{\text{distinct TKS-solutions found in $R$ runs}\}}{R}
\end{equation}
as a measure of the genetic algorithm's efficiency. Our goal is to understand how the function $\mathcal{E}$ depends on each of the genetic algorithm's tunable parameters.

In determining the optimal choice for the hyper-parameters we choose to fix
\begin{equation}
    (n_\text{pop},\,n_\text{elite},\,g_\text{max}) \quad=\quad (250,\,25,\,100)\,.
\end{equation}
as well as $\mathcal{W}_\text{M}=0$ (so that only consistency conditions, rather than phenomenological properties, are considered). This leaves 17 adjustable hyper-parameters: eight cross-over probabilities $p_i$, four mutation rates $r_i$, three weights $\mathcal{W}_i$ and two scales $\Delta_\text{T,S}$. We make the simplifying assumption that the cross-over probabilities ``factorize'', i.e.\ that the probabilities for the eight methods may be parametrized in terms of only three probabilities, namely $p_1$, $p_\text{s}$ and $p_\text{c}$. For example,
\begin{equation}
    p_\textbf{1sc} = p_1p_\text{s}p_\text{c} \,, \qquad p_\textbf{2su} = (1-p_1)p_\text{s}(1-p_\text{c}) \,.
\end{equation}
We are then left with 11 independent hyper-parameters to be chosen, which we take to be parametrized in the following way:
\begin{equation}\label{eq:hyperparametrization}
\begin{aligned}
	h_i &\in \Big\{\,p_1,\;p_\text{s},\;p_\text{c},\; \log_{10}{(r_w)},\;\log_{10}{(r_N)},\;\log_{10}{(r_\pm)},\;\log_{10}{(r_{\mathfrak{S}_4})},\\
	&\hspace{80pt} \mathcal{W}_\text{K},\; \log_{10}\!{\big(\tfrac{\mathcal{W}_\text{T}}{\mathcal{W}_\text{S}}\big)},\; \log_{10}{(\Delta_\text{S})},\;\log_{10}\!{\big(\tfrac{\Delta_\text{T}}{\Delta_\text{S}}\tfrac{\mathcal{W}_\text{S}}{\mathcal{W}_\text{T}}\big)}\, \Big\}
\end{aligned}
\end{equation}
\begin{equation}
\begin{aligned}
    p_1, p_\text{s}, p_\text{c}, \mathcal{W}_\text{K} \in [0,1] \qquad \log_{10}{(\Delta_\text{S})} \in [-1,1] \qquad \log_{10}\!{\big(\tfrac{\Delta_\text{T}}{\Delta_\text{S}}\tfrac{\mathcal{W}_\text{S}}{\mathcal{W}_\text{T}}\big)} \in [1,3]\\
    \log_{10}{(r_w)},\log_{10}{(r_N)},\log_{10}{(r_\pm)},\log_{10}{(r_{\mathfrak{S}_4})},\log_{10}\!{\big(\tfrac{\mathcal{W}_\text{T}}{\mathcal{W}_\text{S}}\big)} \in [-1.5,0.5]\;\;\;
\end{aligned}
\end{equation}
The ratio $\mathcal{W}_\text{T}/\mathcal{W}_\text{S}$ controls the relative importance of the tadpole and SUSY terms in the fitness function when far from optimality and the ratio $(\Delta_\text{T}\mathcal{W}_\text{S})/(\Delta_\text{S}\mathcal{W}_\text{T})$ controls the relative importance of the two terms when close to optimality:
\begin{equation}
    \mathcal{F} \supset \mathcal{W}_\text{T}\;h\Big(\tfrac{\langle T^I\rangle}{\Delta_\text{T}}\Big) + \mathcal{W}_\text{S}\;h\Big(\tfrac{\langle S_a\rangle}{\Delta_\text{S}}\Big) \quad\xrightarrow[\langle S_a\rangle \lesssim \Delta_\text{S}]{\langle T^I\rangle \lesssim \Delta_\text{T}}\quad \frac{\mathcal{W}_\text{T}}{\Delta_\text{T}}\,\langle T^I\rangle + \frac{\mathcal{W}_\text{S}}{\Delta_\text{S}}\,\langle S_a\rangle \,.
\end{equation}
The values of $\mathcal{W}_\text{T}$, $\mathcal{W}_\text{S}$ and $\Delta_T$ can all be uniquely reconstructed by using $\sum_i\mathcal{W}_i=1$.

\begin{figure}[t]
    \centering
	\includegraphics[width=\textwidth]{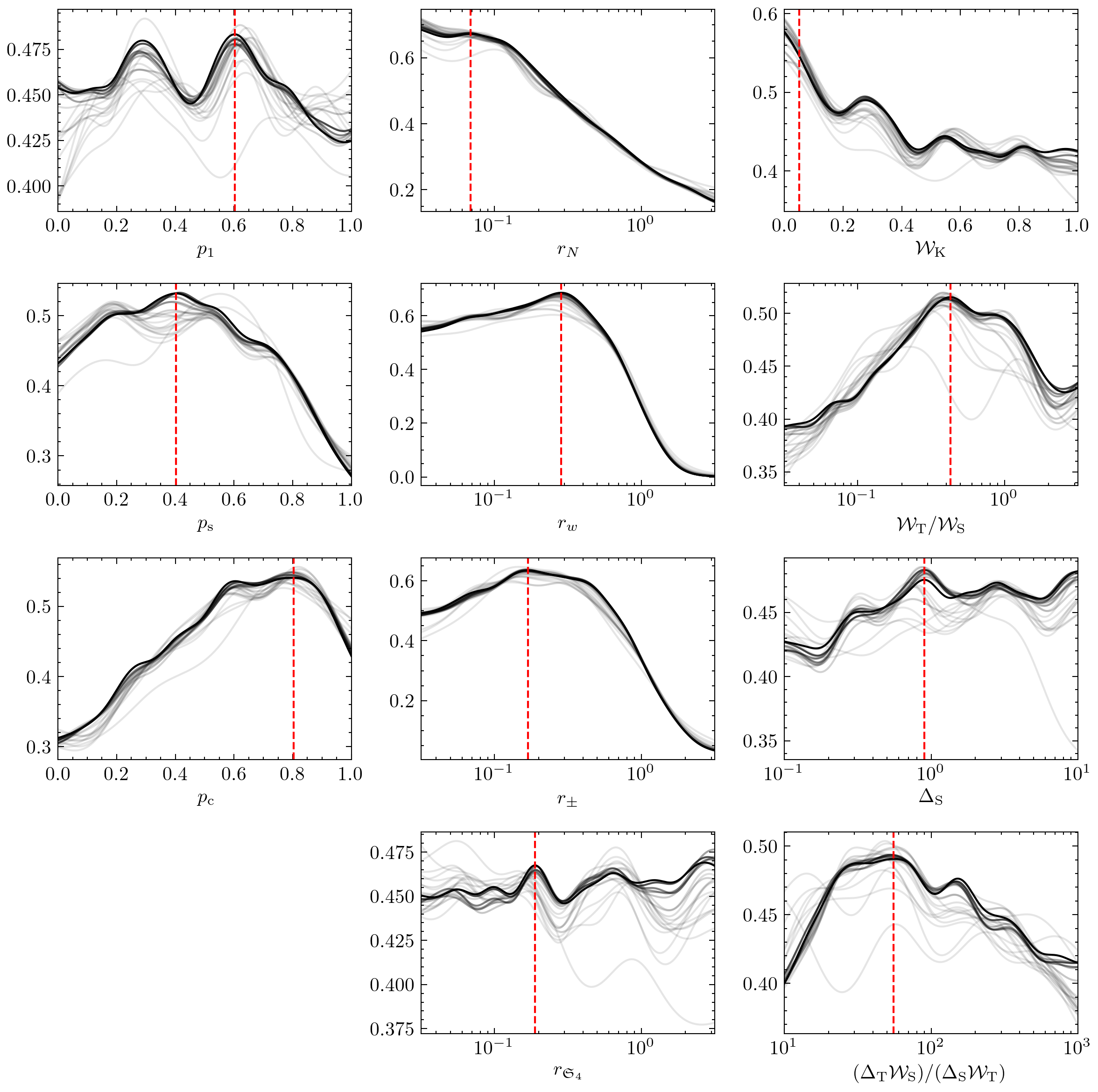}
    \caption{Efficiencies $\mathcal{E}_{h_i}(N;h)$ for the eleven tunable hyper-parameters of Eqn.~\eqref{eq:hyperparametrization}, shown for $N=500$ (lighest) to $N=10,\!000$ (darkest) in increments of $500$. The red dashed lines show the choice for each hyper-parameter.}
    \label{fig:opt}
\end{figure}

In order to efficiently sample the efficiency in this high-dimensional space we employ a random search~\cite{bergstra2012random} wherein the above set of hyper-parameters is chosen at random a number of times (labelled by $j=1,2,\ldots,N$) rather than systematically scanned over. For each set of parameters the genetic algorithm is run $R=100$ times and $\mathcal{E}^{(j)}$ computed. From these data we can construct the functions $\mathcal{E}_{h_i}(N;h)$ for each hyper-parameter $h_i$ which are computed by convolving with a Gaussian kernel $K(z)\propto\exp{(-\frac{1}{2}z^2)}$ and marginalizing over all of the other ten hyper-parameters:
\begin{equation}
    \mathcal{E}_{h_i}(N;h) = \frac{\sum_{j=1}^N \mathcal{E}^{(j)}K\big(\frac{h-h_i^{(j)}}{\sigma}\big)}{\sum_{j=1}^N K\big(\frac{h-h_i^{(j)}}{\sigma}\big)} \,, \qquad \sigma^2 = \operatorname{Var}(h_i^{(j)})\,N^{-2/5} \,.
\end{equation}
The eleven hyper-parameters of Eqn.~\eqref{eq:hyperparametrization} are sampled from uniform distributions on their respective intervals $10,\!000$ times: the resulting $\mathcal{E}_{h_i}$ are shown in Fig.~\ref{fig:opt}. We see that the efficiency is most sensitive to $p_\text{s}$, $p_\text{c}$, $r_w$, $r_N$ and $r_\pm$. Also, although the dependence on $\mathcal{W}_\text{K}$ is relatively mild, there is a preference for small values which points towards a post-selection type strategy where the K-theory condition is not heavily considered during the population's evolution. Based on these results we pick the following values (shown in red in Fig.~\ref{fig:opt}):
\begin{equation}
\begin{aligned}
    (p_1,p_\text{s},p_\text{c}) &= (0.60,\; 0.40,\; 0.80) \,,\\
    (r_w,r_N,r_\pm,r_{\mathfrak{S}_4}) &= (0.07,\; 0.28,\; 0.17,\;0.19) \,,\\
    (\mathcal{W}_\text{T},\mathcal{W}_\text{K},\mathcal{W}_\text{S}) &= (0.28,\;0.05,\;0.67) \,,\\
    (\Delta_\text{T},\Delta_\text{S}) &= (21,\; 0.90) \,.
\end{aligned}
\end{equation}
For nonzero $\mathcal{W}_\text{M}$ we simply rescale $\mathcal{W}_\text{T,K,S}\to(1-\mathcal{W}_\text{M})\mathcal{W}_\text{T,K,S}$, as in Eqn.~\eqref{eq:optparams}.

\bibliography{braneGA}

\end{document}